\newtheorem{theorem}{Theorem}
\newtheorem{property}{Property}
\newcommand{\sref}[1]{Section~\ref{#1}}
\newcommand{\appref}[1]{Appendix~\ref{#1}}
\newcommand{\fref}[1]{Figure~\ref{#1}}
\newcommand{\prref}[1]{Property~\ref{#1}}
\newcommand{\cref}[1]{Constraint~\ref{#1}}
\newcommand{\thref}[1]{Theorem~\ref{#1}}
\newcommand{\lref}[1]{Lemma~\ref{#1}}
\newcommand{\algref}[1]{Algorithm~\ref{#1}}
\newcommand{\ignore}[1]{}
\begin{document}

\title{Completion Delay Minimization \\ for Instantly Decodable Network Codes}
\author{Sameh~Sorour,~\IEEEmembership{Student Member,~IEEE,}
        Shahrokh~Valaee,~\IEEEmembership{Senior Member,~IEEE}
\ignore{\IEEEcompsocitemizethanks{\IEEEcompsocthanksitem The authors are with Edward S. Rogers Sr. Department of Electrical and Computer Engineering,
    University of Toronto, 10 King's College Road, Toronto, ON, M5S 3G4, Canada,
    e-mail:\{samehsorour, valaee\}@comm.utoronto.ca.
    \IEEEcompsocthanksitem This work is an extension to our paper \cite{ICC10} in ICC 2010}}
    \thanks{The authors are with Edward S. Rogers Sr. Department of Electrical and Computer Engineering,
    University of Toronto, 10 King's College Road, Toronto, ON, M5S 3G4, Canada,
    e-mail:\{samehsorour, valaee\}@comm.utoronto.ca.}
    \thanks{This work is an extension to our paper \cite{ICC10} in ICC 2010}
     }

\maketitle

\IEEEoverridecommandlockouts

\begin{abstract}
In this paper, we consider the problem of minimizing the completion delay for \emph{instantly decodable network
coding (IDNC)}, in wireless multicast and broadcast scenarios. We are interested in this class of network coding due to its numerous benefits, such as low decoding delay, low coding and decoding complexities and simple receiver requirements. We first extend the IDNC graph, which represents all feasible IDNC coding opportunities, to efficiently operate in both multicast and broadcast scenarios. We then formulate the minimum completion delay problem for IDNC as a stochastic shortest path (SSP) problem. Although finding the optimal policy using SSP is intractable, we use this formulation to draw the theoretical guidelines for the policies that can efficiently reduce the completion delay in IDNC. Based on these guidelines, we design a maximum weight clique selection algorithm, which can efficiently reduce the IDNC completion delay in polynomial time. We also design a quadratic time heuristic clique selection algorithm, which can operate in real-time applications. Simulation results show that our proposed algorithms efficiently reduce the IDNC completion delay compared to the random and maximum-rate algorithms, and almost achieve the global optimal completion delay performance over all network codes in broadcast scenarios.
\end{abstract}
\begin{keywords}
Network Coding, Instantly Decodable Network Codes; Completion Delay; Wireless Multicast; Wireless Broadcast; Index Coding.
\end{keywords}

\newcounter{mytempeqncnt}

\section{Introduction and Motivation} \label{sec:intro}
\ignore{
\IEEEPARstart{T}{he} design of network coding (NC) algorithms that optimize different system parameters over wireless erasure channels, has recently been an intensive area of research
\cite{4313060,\ignore{4595303},4544612,4895447,4476183,Drinea2009,Sadeghi2009,Sadeghi2010,4397057,5152148,Sundararajan2009}.
In these works, network coding was opportunistically optimized, with respect to the side information available at the different receivers, to improve different performance metrics such as decoding delay, completion delay, video streaming quality and sender backlog. However, the proposed solutions were either for error-free channels (index coding) \cite{4313060,\ignore{4595303},4544612}, dependable on prior knowledge of future receiver errors (offline algorithms) \cite{4476183,Drinea2009}, or optimizing the considered parameter for only one step \cite{Sadeghi2009,Sadeghi2010}. The few works that looked into optimizing parameters on lossy long range transmission scenarios \cite{4397057,5152148} did not propose efficient solutions that can be applied in real-time for typical network settings. Moveover, all previous
works (except for those on index coding) did not consider multicast scenarios, in which receivers are interested in receiving only a subset of the packets rather than all the packets as in the broadcast scenario.
}

\IEEEPARstart{M}{ulticast} Broadcast Services (MBS) have become a corner stone in the design of all future wireless and mobile standards and networks, such as LTE and WiMAX. It is now very common to find a group of pedestrians or people on a bus watching a soccer match on their smart phones while some of them are downloading files, others are watching the breaking news and others are looking for the closest restaurants or using voice over IP applications,... etc. Due to the high demand on these MBS applications and their high bandwidth and delay requirements, it is very important for MBS protocols to not only efficiently utilize the scarce bandwidth resources available to the network but to allow progressive packet reception to satisfy delay requirements. In other words, while these protocols are minimizing the amount of resources (e.g. number of transmissions) consumed by such application to increase the bandwidth efficiency, they should also be able to guarantee the quality of streaming and delay-intolerant applications, in which the received packets should be always useful at their reception instant, in order to prevent interruption or flickering of the stream.\ignore{ In other words, the streaming packets should not experience high delays until their correct reception and passing to upper layers.} The simultaneous achievement of these goals calls for new approachs to increase the efficiency of the packet transmission and recovery processes. One major breakthrough in this area came with the development of network coding (NC).

Recently, NC has shown great abilities to substantially improve transmission efficiency, packet recovery, throughput and delay over broadcast erasure channels \cite{4313060,4595303,4544612,4895447,4476183,Drinea2009,Sadeghi2009,Sadeghi2010,4397057,5152148,Sundararajan2009,ISIT09,TON09,Wang2010,Gatzianas2010}.
Two trends of network coding can be distinguished in the literature, namely random (or full) network coding (RNC) \cite{1228459}\cite{4015713} and opportunistic network coding (ONC)\cite{1159942}\cite{Katti2005}. The former trend combines all the packets with random non-zero and independent coefficients in each transmission. The latter trend exploits the receivers' side information in selecting packets to be coded in each transmission to achieve a certain target. Despite the great interest in RNC in the literature, its ability to recover packets without feedback, and its optimality in reducing the number of packet transmissions in broadcast scenarios \cite{4675714}, it is only feasible for applications with high delay tolerance, since it does not support progressive packet decoding\ignore{, required in streaming applications}. It is also inefficient\ignore{ also fails in minimizing the required resources} in unicast and multicast scenarios, in which different groups of receivers are interested in different subsets of the transmitted packets \cite{TON09}.

In this paper, we are interested in a sub-class of opportunistic network coding, called the \emph{instantly decodable network coding (IDNC)}, in which received packets are allowed to be decoded only at their reception instant and cannot be stored for future decoding. This IDNC approach is currently attracting much attention \cite{5425315,Sadeghi2009,Sadeghi2010} due to its numerous desirable properties. First, IDNC provides instant packet recovery upon appropriate packet reception, a property that perfectly matches the requirements of the MBS streaming applications, and that RNC and general ONC lack.\ignore{ It is also attractive in several other applications such as roadside to vehicle safety message broadcast and coordinated command dissemination to sensors, in which the reduction of the packet decoding delay is important.} Moreover, the IDNC encoding can be implemented using binary XOR, which eliminates the complicated operations over large Galois fields and the coefficient reporting overhead\ignore{, required by linear NC}. This XOR encoding also simplifies the decoding process at the receivers,\ignore{ as each receiver can simply cancel out the packets it already knows. This} and eliminates the need for matrix inversion\ignore{ at the receivers, which is}, which represents a computational bottleneck in RNC \cite{Sadeghi2009}. Finally, no buffers are needed at the receivers to store coded packets for future decoding. These simple decoding and bufferless properties allow the design of simple and cost-efficient receivers, which is an important requirement for mobile hand-held devices.

Despite its attractive properties, IDNC does not guarantee service to all receivers in each transmission, which affects its ability to both maximize the number of decoding receivers in each and every transmission (i.e. capacity) and minimize the number of transmissions to deliver a frame or packets (i.e. completion delay). The former problem was deeply investigated in IDNC \cite{5425315,Sadeghi2009,Sadeghi2010} and ONC \cite{Wang2010,Gatzianas2010}, whereas limited work has addressed the latter problem only for erasure-less channels (a.k.a. the index coding problem \cite{4313060,4595303,4544612}). It has been shown that this index coding problem is NP-Hard to solve and to approximate \cite{4031356,4594999}. Naturally, this complexity becomes worse in case of erasure channels, which leaves us no choice other than designing efficient heuristics to solve it.

This fact raises the following question that we address in this paper: \emph{What is the efficient heuristic policy that can reduce the expected completion delay in IDNC, over erasure channels, for both multicast and broadcast scenarios?} Intuitively, one might think that the best heuristic policy is to maximize the number of receivers that can decode a new packet in each transmission, as studied in \cite{5425315,Sadeghi2009,Sadeghi2010,Wang2010,Gatzianas2010}. In this paper, we show that this intuition is not true and that the solution to the completion delay problem is obtained by giving priority to targeting the receivers with higher demands and worse channels. To reach this result, we first extend the IDNC graph, which represents all the feasible IDNC packet combinations according to receivers' side information, to suit both multicast and broadcast scenarios. We then formulate the minimum completion delay problem in IDNC as a \emph{stochastic shortest path (SSP) problem}, which is a special case of the Markov decision process (MDP), having absorbing states. Although this formulation is impossible to solve, we mainly employ it to draw the theoretical properties of the policy that can efficiently reduce the completion delay in IDNC.

Based on these properties, we design a two-stage maximum weight clique selection algorithm to reduce the completion delay in IDNC in polynomial time for moderate graph sizes. For further complexity reduction, we design a quadratic time heuristic algorithm, based on greedy maximum weight vertex search, which is more suitable for real-time applications. We finally compare the performance of our proposed optimal and heuristic maximum weight clique search algorithms to RNC, the random IDNC algorithm (that selects served receivers randomly) and the maximum clique IDNC algorithm (maximizing the number of decoding receivers).

The contributions of this papers are summarized as follows:
\begin{itemize}
\item To the best of our knowledge, this paper presents the first rigorous study on reducing the IDNC completion delay over erasure channels. In this study, we do not limit ourselves to maximizing the number of decoding receivers\ignore{in each transmission} as in \cite{5425315,Sadeghi2009,Sadeghi2010,Wang2010,Gatzianas2010} but rather investigate both the order of receiver service and the evolution of coding opportunities along the transmission process, which were shown to be the key factors affecting the optimization of completion delay in IDNC.
\ignore{\item We show that targeting the maximum number of receivers in each transmissions is not the good solution for the problem but rather targeting the receivers with larger sets of missing packets and larger erasure probabilities.}
\item We design\ignore{ simple} polynomial and quadratic-time heuristics that\ignore{ are shown to} achieve near optimal completion delay performance.
\end{itemize}

The rest of the paper is organized as follows. We first summarize related works in
\sref{sec:related}. In \sref{sec:model}, we introduce the system model and parameters. The IDNC graph is illustrated in \sref{sec:GIDNC-graph}. We present the problem formulation in \sref{sec:SSP-formulation} and draw the properties of efficient IDNC completion delay reduction in Sections \ref{sec:properties}, \ref{sec:geometry} and \ref{sec:graph-evolution}. The proposed algorithms are introduced in \sref{sec:algorithms} and their performances are evaluated in \sref{sec:simulations}. Finally, \sref{sec:conclusion} concludes the paper.

\section{Related Work} \label{sec:related}
Since its first introduction in \cite{850663}, network coding has been a great attraction to
numerous studies as a routing and scheduling scheme that attains maximum information flow in a
network.\ignore{ The core of network coding is the idea of packet mixing using several techniques such as packet XOR \cite{1159942} and linear coding \cite{1176612}.}
In \cite{4313060,4595303,4544612}, the problem of determining packet combinations, to minimize the number of transmissions (i.e. completion delay) over erasure-less channels, was studied under the name of ``index coding''. In \cite{4031356}, it has been shown that finding the optimal solution of the index coding problem is NP-hard and thus different heuristics to solve the index coding problem were proposed in \cite{4544612}. In this paper, we extend the study to the case of erasure channels. Our problem differs from index coding in that the feedback status of different receivers changes probabilistically after each transmission over erasure channels. Thus, the coded packets cannot be scheduled for the whole transmission process all at once, as in index coding, but rather require to be dynamically scheduled after each transmission according to the received feedback.\ignore{ In \cite{ISIT09,TON09}, we proposed adaptive selection algorithms between random IDNC and RNC to reduce the completion delay. However, the problem of minimizing the completion delay in IDNC is still not solved. In \cite{4549741}, the bandwidth efficiencies of different network coded transmission schemes were derived.}

In \cite{4895447}, the authors proposed an online network coding algorithm for the three-receiver case, proved its rate optimality and conjectured\ignore{ that it achieves an} its asymptotically optimal average delay. In \cite{4476183} and \cite{Drinea2009}, the decoding delay performance of offline algorithms was analyzed and the decoding delay of several greedy online NC algorithms were compared for i.i.d. erasure channels. These proposed algorithms performed un-prioritized packet selection for each NC transmission and did not consider the channel conditions in their selection procedures. \cite{Sadeghi2009,Sadeghi2010} proposed a prioritized and channel-aware packet selection algorithm that achieves optimal decoding delay for a more strict version of IDNC. All these works are clearly different from our problem in terms of objective and proposed solutions.

For a more general ONC scenario than IDNC, in which un-decoded packets can be stored for future use, \cite{Wang2010,Gatzianas2010} study the maximization of service rates (i.e. capacity) of multiple unicast sessions over 1-to-$K$ broadcast erasure channels. Inner and outer capacity bounds were derived and were shown to meet in the special cases of symmetric and spatially independent erasure channels. \cite{Wang2010,Gatzianas2010} also proposed capacity achieving packet evolution algorithms, serving subsets of receivers with incremental sizes in order to maximize the number of decoding receivers in every transmission, and thus their rates. Despite the more general decodability assumption considered in these works, our paper differs from them in both the optimization objective and proposed algorithms. Unlike the aim of \cite{Wang2010,Gatzianas2010} to maximize the achievable unicast rates, through maximizing the number of decoding receivers in every transmission, our paper studies the problem of minimizing the total number of transmissions (i.e. completion delay) to deliver a frame of multicast or broadcast packets over erasure channels, which makes it an extension to the index coding problem. Moreover, the proposed packet evolution algorithms in \cite{Wang2010,Gatzianas2010} do not prioritize receiver service, but rather serve receiver subsets, with same size, in an \emph{arbitrary} sequential cyclic or acyclic fashion. Consequently, these algorithms are not suitable solvers to the completion delay minimization problem, whose solution greatly depends on the prioritization of receiver service according to their demand and erasure probabilities, as will be shown in \sref{sec:geometry}. Our proposed algorithms focus on implementing this prioritization and are thus significantly different from those proposed in \cite{Wang2010,Gatzianas2010}.

\ignore{
our parameter of interest greatly depends not only on per transmission benefits to maximize capacity but mainly on the order of receiver service and the evolution of coding opportunities along the frame transmission, which are naturally not investigated for capacity studies in \cite{Wang2010,Gatzianas2010}. Moreover, we focus on a specific set of ONC, namely IDNC, for its strict aim of instant packet recovery required in many practical applications.}

\ignore{In all previous studied online algorithms, the packet selection is optimized to increase the benefit per transmission. On the other hand, In \cite{4397057,5152148}, a Markov decision processes (MDP) \cite{Puterman1994} model was employed to find the optimal NC selection policy that minimizes the distortion of video streams over a finite transmission horizon. However, the dimensionality of the MDP's state and action spaces makes the computation of these optimal policies intractable, both online and offline, for typical network settings.\ignore{ In \cite{5152148}, a simulation based dynamic programming algorithm was proposed to reduce the computational complexity. However, the resulting complexity of the proposed algorithm is still intractable.}}

\section{System Model and Parameters} \label{sec:model}
The system model we consider in this paper\ignore{ is depicted in \fref{fig:model}.
\begin{figure}[t]
\centering
  \includegraphics[width=0.8\linewidth]{model}\\
  \caption{System model and packet status after the initial transmission phase}\label{fig:model}
\end{figure}
It} consists of a wireless sender that is required to deliver a frame (denoted by
$\mathcal{N}$) of $N$ source packets to a set (denoted by $\mathcal{M}$) of $M$ receivers.
Each receiver is interested in receiving either a subset or all the packets of $\mathcal{N}$. The
former case is referred to as ``multicast'' whereas the latter case is referred to as ``broadcast''. We will refer to the requested and undesired packets of any receiver by its ``primary'' and ``secondary'' packets.
The sender initially transmits the $N$ packets of the frame uncoded in an \emph{initial
transmission phase}. Each receiver listens to all transmitted packets (even the ones that it does not want) and feedbacks to the sender a positive acknowledgement (ACK) for each received packet. At the end of the initial transmission phase, three sets of packets are attributed to each receiver $i$:
\begin{itemize}
\item The \emph{Has} set ($\mathcal{H}_i$) is defined as the set of primary and secondary packets correctly received by receiver $i$.\ignore{ This set includes both desired and undesired packets by this receiver.}
\item The \emph{Lacks} set ($\mathcal{L}_i = \mathcal{N} \setminus \mathcal{H}_i$) is defined as the set of primary and secondary packets\ignore{ that are} not received by $i$.\ignore{ In other words, $\mathcal{L}_i = \mathcal{N} \setminus \mathcal{H}_i$.}
\item The \emph{Wants} set ($\mathcal{W}_i\subseteq\mathcal{L}_i$) is defined as the set of primary packets that receiver $i$\ignore{ wants to receive but} has not yet received.\ignore{ Note that $\mathcal{W}_i\subseteq\mathcal{L}_i$.}
\end{itemize}
The sender stores this information in a \emph{state feedback matrix (SFM)} $\mathbf{F} =
\left[f_{ij}\right],~\forall~i\in\mathcal{M},j\in\mathcal{N}$ such that $f_{ij} = 0$ if $j\in\mathcal{H}_i$, $f_{ij} = 1$ if $j\in\mathcal{W}_i$, and $f_{ij} = -1$ if $j \in \mathcal{L}_i\setminus \mathcal{W}_i$.
\ignore{
\begin{equation} \label{eq:SFM}
f_{ij} =
\begin{cases}
0 \qquad &j \in \mathcal{H}_i\\
1 \qquad &j \in \mathcal{W}_i\\
-1 \qquad &\mbox{otherwise}\;. \ignore{j \in \mathcal{L}_i(s)\setminus \mathcal{W}_i(s)}
\end{cases}
\end{equation}}

After the initial transmission phase, a recovery transmission phase starts, in which the
sender exploits the reception diversity in the SFM to employ NC. These NC
packets must include at most one source packet from the Wants or Lacks sets of a subset or all of the
receivers. The receivers that cannot decode a new source packet from this NC packet discard it.
For each decoded source packet, the receivers send ACK packets that are used by the sender to update the SFM and the\ignore{ attributed} sets $\mathcal{H}_i$, $\mathcal{L}_i$ and $\mathcal{W}_i$, $\forall~i$.
This process is repeated until all receivers obtain their requested packets. We define the \emph{completion delay} of a frame as the number of recovery transmissions required to deliver all requested packets to their receivers.

Define $\boldsymbol{\varrho} = \left[\varrho_1,\dots,\varrho_M\right]$, $\boldsymbol{\varphi} = \left[\varphi_1,\dots,\varphi_M\right]$ and $\boldsymbol{\psi} = \left[\psi_1,\dots,\psi_M\right]$ as the Has, Lacks and Wants vectors, such that $\varrho_i$, $\varphi_i$ and $\psi_i$ are the cardinalities of $\mathcal{H}_i$, $\mathcal{L}_i$ and $\mathcal{W}_i$, respectively. Let $p_{i}$ and $q_i = 1-p_i$ be the packet erasure and success probabilities observed by receiver $i$, respectively. We assume that\ignore{ the erasure probabilities} $p_i$ and $q_i~\forall~i$ do not change during the frame transmission period. Also, let $\mu_i$ be the demand ratio of receiver $i$, defined as the ratio of its primary packets in the frame to the frame size $N$. Given this definition of $\mu_i$, we can focus on studying the multicast scenario, since the broadcast scenario can be viewed as a special case of the multicast scenario, in which\ignore{ the demand ratios for all receivers is equal to} $\mu_i = 1~\forall~i$.\ignore{ Consequently, we will study the multicast scenario in our formulations and algorithm designs, then will test them for the broadcast scenario in \sref{sec:simulations} by setting all the demand ratios to 1.} Finally, define $\mu = \frac{1}{M}\sum_{i=1}^M \mu_i$ as the average of the demand ratios of all receivers.

\section{IDNC graph} \label{sec:GIDNC-graph}
To form optimized IDNC packets, we should first design a representation of all feasible packet combinations that are instantly decodable by any subset or all the receivers\ignore{, in the multicast scenario}. An initial idea about the representation of packet combinations was introduced in the form of a graph, when designing a heuristic algorithm to solve the index coding problem \cite{4544612,4313060}. This graph, which we will denote by $\mathcal{G}_\rho(\mathcal{V}_\rho,\mathcal{E}_\rho)$, is constructed by first inducing a vertex $v_{ij}$ in $\mathcal{V}_{\rho}$ for each packet $j \in \mathcal{W}_i$, $\forall~i\in\mathcal{M}$. Two vertices $v_{ij}$ and $v_{kl}$ in $\mathcal{G}_\rho$ are connected by an edge in $\mathcal{E}_\rho$ if one of the following conditions is true:
\begin{itemize}
\item C1: $j = l$ $\Rightarrow$ The two vertices are induced by the loss of the same packet $j$ by two different receivers $i$ and $k$.
\item C2: $j\in \mathcal{H}_k$ and $l \in \mathcal{H}_i$ $\Rightarrow$ The requested packet of each vertex is in the Has set of the receiver\ignore{ that induced} of the other vertex.
\end{itemize}
Consequently, each edge between two vertices in the graph represents a \emph{coding opportunity}, which is defined as an opportunity of generating an instantly decodable packet for the two receivers inducing these vertices. Given this graph, we can easily define the set of all feasible packet combinations in IDNC as the set of packet combinations defined by all maximal cliques in $\mathcal{G}_\rho$ (a maximal clique is a clique that is not a subset of any larger clique).\ignore{ Consequently,} The sender can generate an IDNC packet for a given transmission by XORing all the packets identified by the vertices of a selected maximal clique $\kappa_\rho$ in $\mathcal{G}_\rho$.

The above formulation of $\mathcal{G}_\rho$ is suitable when optimizing packet combinations in a broadcast setting as in \cite{ICC10\ignore{,GC10}}. \ignore{However, the optimization of long term packet combinations in a multicast scenario necessities the consideration of secondary packet delivery to the receivers that are not considered for primary packet reception.}
In multicast scenarios, we can explore the enhancement of coding opportunities at receivers that are not considered for primary packet reception, by delivering secondary packets to them. Although these packets are not requested at these receivers, their reception along the steps of the recovery phase, when they are not targeted with primary packets, will enlarge their Has sets. According to Condition C2, this will increase chances of creating more coding opportunities that can serve these receivers in the future steps towards completion. However, this service of secondary packets should never affect the instant decodability of the primary packets at the other receivers.

To achieve both goals, we propose a new two-layered graph $\mathcal{G}(\mathcal{V},\mathcal{E})$. The primary layer consists of graph $\mathcal{G}_\rho$, described above. The secondary layer $\mathcal{G}_\sigma(\mathcal{V}_\sigma,\mathcal{E}_\sigma)$ is constructed by generating a vertex $v_{ij} \in \mathcal{V}_\sigma$ for each packet $j \in \mathcal{L}\setminus\mathcal{W}_i$, $\forall~i\in\mathcal{M}$, and connecting any two vertices\ignore{ $v_{ij}$ and $v_{kl}$} satisfying either C1 or C2. Finally, we connect any two vertices from both layers if either C1 or C2 holds. In the rest of the paper, we will refer to $\mathcal{G}_\rho$, $\mathcal{G}_\sigma$ and $\mathcal{G}$ as the primary, secondary and IDNC graphs, respectively. \fref{fig:IDNC-graph} depicts an example of a feedback table and its corresponding IDNC graph. It is easy to show that the overall complexity of graph construction is $O(M^2N)$.
\begin{figure}[t]
\centering
  \includegraphics[width=0.65\linewidth]{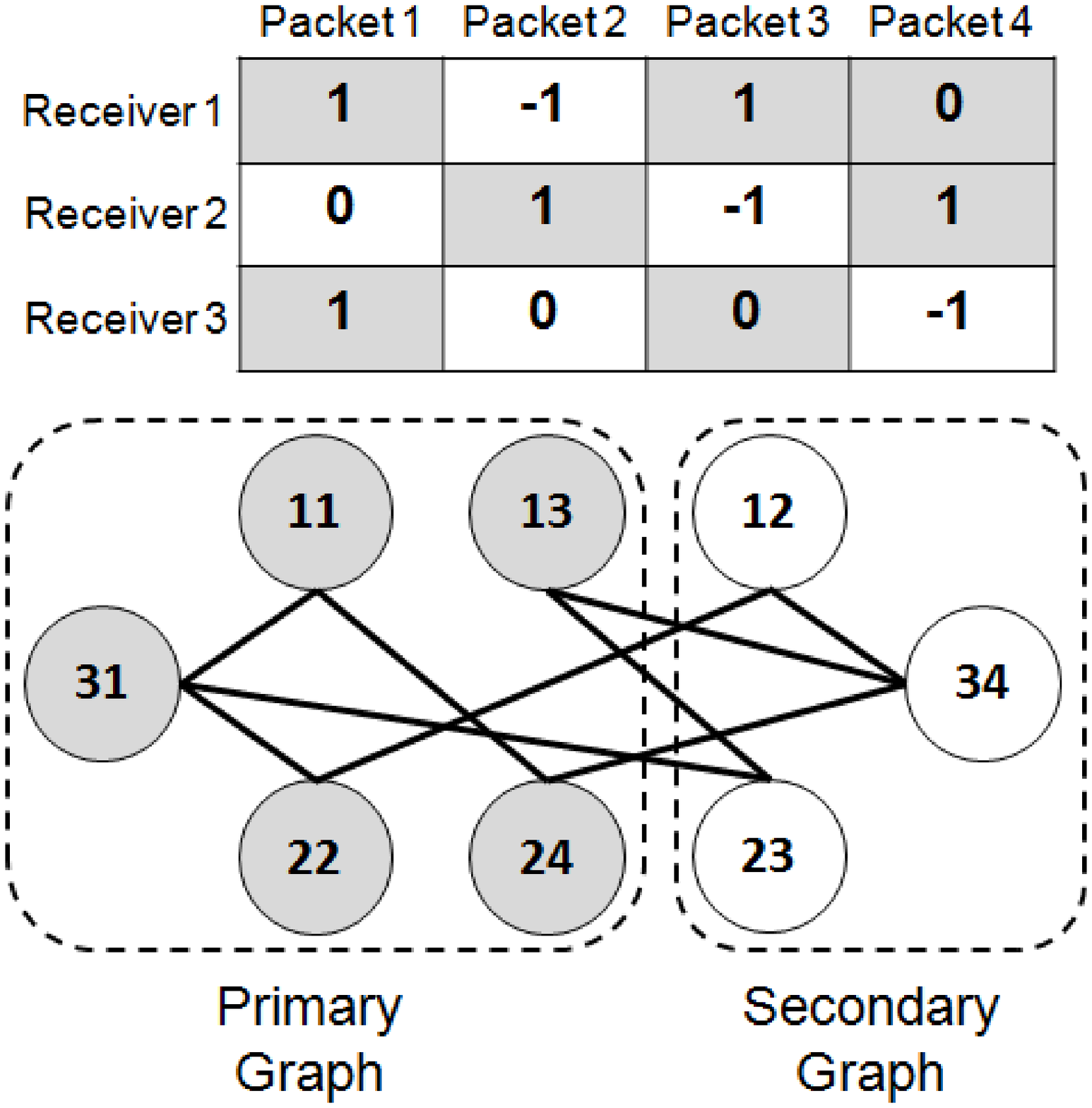}\\
  \caption{Example of a feedback matrix and its corresponding IDNC multicast graph. The shaded and white boxes and vertices represent the wanted and unwanted packets, respectively.}\label{fig:IDNC-graph}
\end{figure}

According to the design of $\mathcal{G}$, we can easily infer that each receiver can have at most one primary or secondary vertex in each of its maximal cliques\ignore{, to maintain instant decodability}. Consequently, the selection of a maximal clique for a given transmission is equivalent to the selection of a set of targeted receivers. We can thus envision the completion delay minimization problem, given the IDNC graph, as a problem of finding the optimal scheduling of targeted receiver subsets under the IDNC constraint, in order to complete the recovery phase as fast as possible. After selecting the maximal clique determining the targeted receivers for any transmission, the source packets identified by its vertices will be XORed to generate the coded packet.

In the rest of the paper, we say that a receiver is \emph{targeted} by an IDNC packet if the selected maximal clique includes a vertex induced by this receiver. We also define $\mathcal{T}_\rho(\kappa)$, $\mathcal{T}_\sigma(\kappa)$ and $\mathcal{T}(\kappa)$ as the set of all primary, secondary and overall targeted receivers of a given maximal clique $\kappa$.\ignore{ In the following sections, we will study the optimal maximal clique (or targeted receiver) selection over graph $\mathcal{G}$ to minimize the expected completion delay.}

\section{Problem Formulation using SSP} \label{sec:SSP-formulation}
\subsection{The SSP Problem}
The stochastic shortest path (SSP) problem is a special case of the infinite horizon MDP, which can
model decision based stochastic dynamic systems with terminating situations. In SSP, all the possible situations the system may encounter are modeled as states in a state space $\mathcal{S}$. In each state $s\in\mathcal{S}$, the system
must select an action $a$ from an action space $\mathcal{A}(s) \subseteq \mathcal{A}$ that will
charge it an immediate cost $c(s,a)$ (where $\mathcal{A}$ denotes the action space of SSP). The
terminating situations of the system can be thus represented as zero-cost absorbing states. Once an action $a$ is taken at state $s$, the system can move to a state $s'$ with probability $P_a(s,s')$, which only depends on the current state and the taken action. An SSP policy $\pi = [\pi(s)]$ is a mapping from $\mathcal{S}\rightarrow\mathcal{A}$ that specifies a given action to each of the states. The optimal policy $\pi^*$ of an SSP is the one that minimizes
the cumulative mean cost until an absorbing state is reached.

The algorithms that solve SSPs define a value function $V_{\pi}(s)$ as the expected cumulative cost
until absorption, when the system starts at state $s$ and follows policy $\pi$. It can be
recursively expressed $\forall~s\in\mathcal{S}$ as:
\begin{equation} \label{eq:value-func}
V_{\pi}(s) = c(s,\pi(s)) + \sum_{s'\in\mathcal{S}(s,a)}P_{\pi(s)}(s,s')\:V_{\pi}(s')\;,
\end{equation}
where $\mathcal{S}(s,a)$ is the set of successor states to $s$ when action $a$ is taken
$\left(\mbox{i.e.}~ \mathcal{S}(s,a) = \left\{s'|P_a(s,s')>0\right\}\right)$. Consequently, the
optimal policy at state $s$ can be defined $\forall~s\in\mathcal{S}$ as:
\begin{equation} \label{eq:opt-policy}
\pi^*(s) = \arg\min_{a\in\mathcal{A}(s)} \left\{c(s,a) +
\sum_{s'\in\mathcal{S}(s,a)}P_{a}(s,s')\:V_{\pi^*}(s')\right\}\;.
\end{equation}

\subsection{Problem Formulation} \label{sec:formulation}
The problem of minimizing the expected completion delay in IDNC can be formulated as an SSP problem as follows:\\

\subsubsection{State Space $\mathcal{S}$}
States are defined by all possibilities of SFM $\mathbf{F}(s)$ that may occur during the recovery
transmission phase. For state $s$, the matrix represents the content of Has, Lacks and Wants sets in $s$
(i.e. $\mathcal{H}_i(s)$, $\mathcal{L}_i(s)$ and $\mathcal{W}_i(s)$ $\forall~i\in\mathcal{M}$) as defined by
\eqref{eq:SFM}. According to its definition, the state space has a size of $|\mathcal{S}|=O\left(2^{MN}\right)$. We can characterize each state $s$ by its Has, Lacks and Wants vectors $\boldsymbol{\varrho}(s)$, $\boldsymbol{\varphi}(s)$ and $\boldsymbol{\psi}(s)$. Note that several states can have the same cardinality vectors. The Wants vector of any absorbing state is $1\times M$, which we denote by $\boldsymbol{\psi_0}$. \\

\subsubsection{Action Spaces $\mathcal{A}(s)$}
For each state $s$, the action space $\mathcal{A}(s)$ consists of the set of all possible maximal cliques in the IDNC graph $\mathcal{G}(s)$, constructed from the SFM $\mathbf{F}(s)$.\\

\subsubsection{State-Action Transition Probabilities}
To define the state-action transition probability $P_{\kappa(s)}(s,s')$ for an action $\kappa(s) \in
\mathcal{A}(s)$, we first introduce the following two sets:
\begin{align}
\mathcal{X} &= \left\{i\in\mathcal{T}(\kappa(s))\;\big|\;\varphi_i(s)>\varphi_i(s')\right\}\\
\mathcal{Y} &= \left\{i\in\mathcal{T}(\kappa(s))\;\big|\;\varphi_i(s)=\varphi_i(s')\right\}\;.
\end{align}
The first set includes the targeted receivers whose Lacks sets have decreased from state $s$ to state $s'$, and thus have successfully received the IDNC packet generated from $\kappa(s)$. The second set includes the targeted receivers that have lost the IDNC packet generated from $\kappa(s)$ and thus their Lacks sets did not change. Based on the definitions of these sets, $P_{\kappa(s)}(s,s')$ can be expressed as follows:
\begin{equation} \label{eq:trans-prob}
P_{\kappa(s)}(s,s') = \prod_{i\in\mathcal{X}}\:q_{i} \cdot \prod_{i\in\mathcal{Y}}\:p_{i}\;.
\end{equation}
\fref{fig:ssp-example} depicts the state representation and the action space for the example in \fref{fig:IDNC-graph}. It also depicts the possible transitions given that action $a_7$ is performed.\\
\begin{figure}[t]
\centering
  \includegraphics[width=0.9\linewidth]{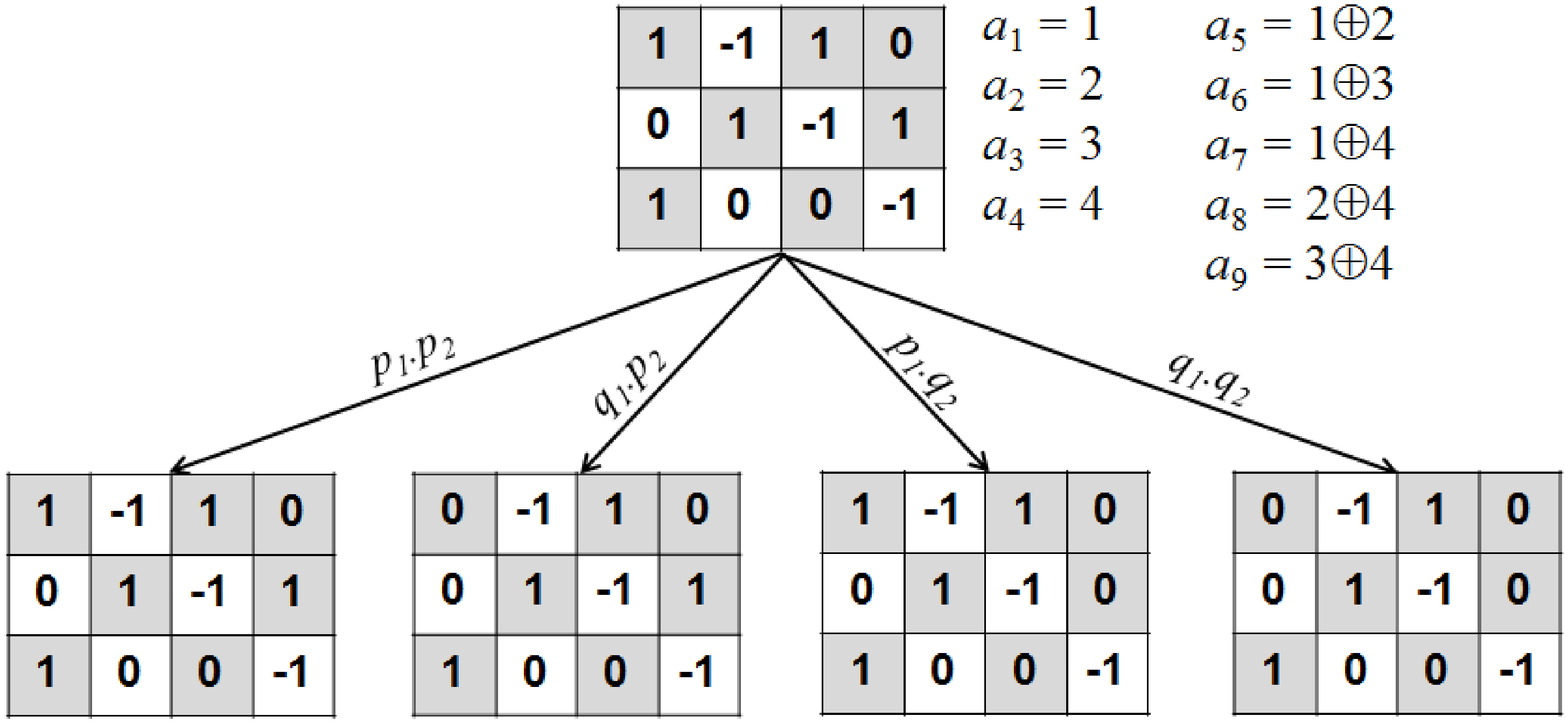}\\
  \caption{State representation, action space of the example in \fref{fig:IDNC-graph} and its possible transitions for action $a_7$.}\label{fig:ssp-example}
\end{figure}

\subsubsection{State-Action Costs}
The expected completion delay is defined in SSP terms as the expected number of transitions in the
process before arriving to an absorbing state. Since any transition (due to any action) takes one
packet transmission, the cost payed by the process is one time-slot. Consequently, the costs of all
actions in all states should be set to 1. In other words, $c(s,\kappa(s)) = 1$ $\forall~\kappa(s)\in\mathcal{A}(s),
s\in\mathcal{S}$.

\subsection{SSP Solution Complexity}
The optimal policy of an SSP problem can be computed using the policy iteration and value iteration algorithms. The complexities of these algorithms are $\Theta\left(|\mathcal{S}|^3 + |\mathcal{S}|^2|\mathcal{A}|\right)$ and $\Theta\left(|\mathcal{S}|^2|\mathcal{A}|\right)$. According to the dimensions of $\mathcal{S}$ and $\mathcal{A}(s)$ described in \sref{sec:formulation}, we conclude that computing the optimal policy is very difficult in real-time for typical values of $M$ and $N$. Even the simulation based technique proposed in \cite{5152148} will not be able to compute the optimal policy in real-time since its complexity still scales with $|\mathcal{S}|$.

\section{SSP Properties} \label{sec:properties}
Despite the complexity of solving the SSP problem formulated in \sref{sec:formulation}, we can study its properties and structure to draw the characteristics of policies that can efficiently minimize the expected completion delay. From \sref{sec:formulation}, it is easy to infer that the SSP formulation has the following properties:
\begin{property}[Uniform Cost] \label{prop:cost}$\quad$\\
The costs of all actions in all states are all the same except for the absorbing state.
\end{property}
\begin{property}[Non-singleton acyclicity] \label{prop:acyclicity}$\quad$\\
No state can be revisited once the process moves to a next state, and thus the SSP formulation is acyclic. Indeed, if some packets are received by some receivers when an action is taken at a given state, there is no means of going back with these receivers not having these packets. However, a state can revisit itself (singleton cycles) if none of the targeted receivers by the taken action receives the IDNC packet.
\end{property}
\begin{property}[Non-increasing successor value functions] \label{prop:decreasing-value}$\quad$\\
Since there are no cycles of size more than one, the successor states of a state $s$ are all closer to the absorbing states than $s$. Consequently, the expected cost to absorption starting from a given state is always greater than or equal to the expected costs to absorption starting from all its successor states.
\end{property}

These three properties can be employed to draw the properties of the optimal policy $\pi^*$ minimizing the mean completion delay at any given state $s$ as follows. From the uniform cost in \prref{prop:cost}, we have:
\begin{align} \label{eq:optimal-policy}
\pi^*(s) &= \arg\min_{\kappa(s)\in\mathcal{A}(s)} \left\{ 1 + \sum_{s'\in\mathcal{S}(s,\kappa(s))} P_\kappa(s,s')\:V_{\pi^{*}}(s')\right\} \nonumber\\
 &= \arg\min_{\kappa(s)\in\mathcal{A}(s)} \left\{\sum_{s'\in\mathcal{S}(s,\kappa(s))} P_\kappa(s,s')\:V_{\pi^{*}}(s')\right\} \nonumber\\
 &= \arg\min_{\kappa(s)\in\mathcal{A}(s)} \left\{\mathds{E}_{\kappa(s)} \left[V_{\pi^{*}}(s')\right]\right\}\;,
\end{align}
where $\mathbb{E}_{\kappa(s)}$ is the expectation operator over the different transition probabilities when
action $\kappa(s)$ is taken. Thus, the optimal action at state $s$ is the action minimizing the expectation
of the optimal value functions of the successor states. From Properties \ref{prop:acyclicity} and \ref{prop:decreasing-value}, we know that all successors of state $s$ are closer to the absorbing state (thus having smaller mean completion delays) except for itself. Consequently, the optimal action at state $s$ is the one that has high probability in moving to states with the minimum expected residual completion delay (i.e. minimum mean time to absorption), given the optimal policy.

Now the problem is that there is no close form expression for the optimal value functions $V_{\pi^*}(s')$ in IDNC and thus there is no means of accurately computing it to determine the optimal policy without solving the SSP. However, based on the previous properties and facts, we can easily infer that the value of $V_{\pi^*}(s')$ for any $s'$, that is successor to state $s$, depends on two main factors:
\begin{itemize}
\item The closeness of the state's Wants vector $\boldsymbol{\psi}(s')$ to that of the absorbing states $\boldsymbol{\psi_0}$.
\item The number and sizes of the primary maximal cliques available as actions in state $s'$.
\end{itemize}
Indeed, the smaller the distance between $\boldsymbol{\psi}(s')$ and $\boldsymbol{\psi_0}$, the smaller the value function of state $s'$. However, this condition is not enough as we should also check the availability of efficient actions at this state that can bring the system faster to an absorbing state. In general, the successor states of $s$, whose primary graphs include more numerous and larger maximal cliques, have more chances of reaching the absorbing state faster than the others. Since all states $s'$ are successors of a same state $s$, their graphs are different variants of $\mathcal{G}(s)$ depending on the vertices that have been served. Consequently, the action at state $s$, which can maximize the coding opportunities in the IDNC primary graph at state $s'$, will result in larger and more numerous primary maximal cliques, which will help in reaching an absorbing state faster in future steps. Based on these observation, we state that the policy that can efficiently reduce the expected completion delay in IDNC should aim, at any visited state, to both:
\begin{itemize}
\item Bring the system Wants vector the closest to the absorbing states vector $\boldsymbol{\psi_0}$.
\item Maximize the coding opportunities in the successor state's primary graph.
\end{itemize}
If we can find a policy that can simultaneously achieve these two goals, we will employ it to design an algorithm to efficiently reduce the expected completion delay for IDNC. To investigate the existence of such policy, we will study two important features of the problem, namely its geometric structure, and the evolution of coding opportunities in the IDNC graph. This will be the target of the next two sections.

\section{Geometric Structure} \label{sec:geometry}
In this section, we will explore the actions, which have high chances of moving the system Wants vector closest to that of the absorbing states. Given the representation of the SSP states by their Wants vector, we can define a geometric structure as follows. Define an $M$-dimensional space, and locate to each point $\Psi = \left[\Psi_1,\dots,\Psi_M\right]$ in this space all the states having Wants vectors equal to the coordinates of this point. Although many states can share the same Wants vector and thus can be located at the same point, these states differ from one another by their IDNC graphs. All absorbing states will be located at the origin $\Psi_0$ of this space. Note that this geometric representation has the same non-singleton acyclicity property as the SSP (i.e. a point cannot be revisited after it is left).\ignore{ In this representation, we can express the actions as vectors of dimension $M$, with the $i$-th entry equal to one (zero) if receiver $i$ is (not) targeted by the action.}

Since at most one packet can be decoded by each receiver from any IDNC transmission, the system can at most move from the point $\Psi = \boldsymbol{\psi}(s)$ to a point $\Psi' = \boldsymbol{\psi}(s')$ which is a vertex in the hypercube $\Gamma(s)$ defined as:
\begin{equation}
\Gamma(s) = \bigg\{\Psi' \: \big| \: \psi_i(s) - \psi_i(s') \in \{0,1\} ~~\forall~i\in\mathcal{M}\bigg\}\;.
 \end{equation}
In other words, $\Gamma(s)$ is the hypercube of side length 1, in which $\psi_i(s)$ is the corner having the largest coordinates. In this case, the optimal action at any state is the one that can transitions the system to the opposite diagonal point in the M-dimensional hypercube, for which $\psi_i(s) - \psi_i(s') = 1~\forall~i\in\mathcal{M}$. This action means that all $M$ receivers are targeted with primary packets. If such actions exist and are applied in all visited states, we will reach completion faster. However, these actions will most probably not exist in most states due to the instantly decodability constraint. Consequently, We need a method to estimate the closeness of other points to the absorbing state to evaluate our IDNC scheme.

\fref{fig:geo-structure-1} depicts the geometric structure of the example in \fref{fig:ssp-example} after removing the fourth column (i.e. removing the fourth packet and the actions it appears in). Consequently the system is at point identified by the Wants vector $\boldsymbol{\psi}(s) = [2,1,1]$.
\begin{figure}[t]
\centering
\includegraphics[width=0.6\linewidth]{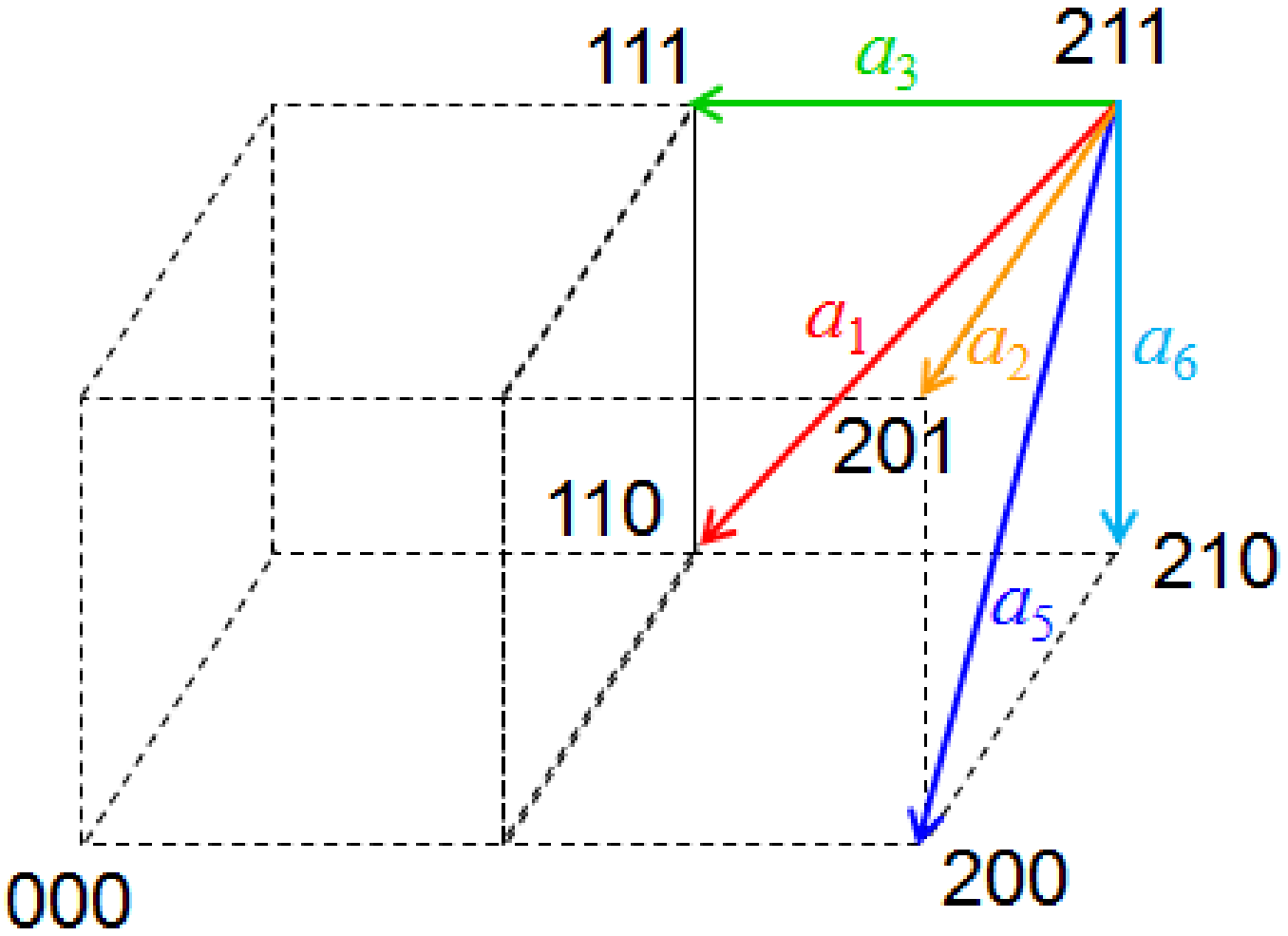}\\
\caption{Geometric structure of the example in \fref{fig:ssp-example} after removing the fourth column}
\label{fig:geo-structure-1}
\end{figure}
In this example, there are only five actions $a_1, a_2, a_3, a_5, a_6$ according to their notation in \fref{fig:ssp-example}. Assuming that the system is erasure-free, action $a_3$ will lead the system to point $[1,1,1]$ whereas action $a_5$ will lead it to point $[2,0,0]$. Although $a_5$ targets more receivers than $a_3$ (which makes it a capacity achieving action as of \cite{Wang2010,Gatzianas2010}), we can clearly see that the latter action gives the chance to the system to reach absorption with one more erasure-free transmission, if there exists an IDNC packet targeting all three receivers at the state located at point $[1,1,1]$. This closeness to absorption in terms of Wants vectors is shown through the smaller geometric distance from point $[1,1,1]$ to $\Psi_0$, compared to point $[2,0,0]$. We can infer from this example that minimizing the maximum entry of the Wants vector (i.e. $\max_i\{\psi_i(s')\}$) brings the system closest to the absorbing point. The intuition behind this finding is that the receivers having the largest Wants sets will impose their Wants set cardinalities as lower bounds on the completion delay. Consequently, serving these receivers first gives hope to reduce this lower bound at each step whereas ignoring them will not change the lower bound.

However, minimizing the maximum of the Wants vector entries is not enough to describe the actions with closest successor states to absorption. For example, actions $a_1$ and $a_3$ in \fref{fig:geo-structure-1} have the same value for $\max_i\{\psi_i(s')\}$, but $a_1$ brings the system closer to the absorbing point compared to $a_3$ in terms of Wants vector, since it serves an additional receiver with smaller Wants set. This is also reflected on the geometric distance from the two destination points to $\Psi_0$.

From the above example, we can conclude that, in order to bring the system closest to the absorbing point in terms of Wants vector, the sender should give more weight to serving the receivers with largest coordinate entries, while maximizing the number of served receivers with the smaller coordinate entries. This weighting can be done through norm expressions. For example, the $L_2$ norm (Euclidian distance) represented the state closeness to absorption, in terms of the Wants vectors, in the previous examples. The larger the employed norm, the more biased the weighting in giving service to the receivers with largest coordinate entries.

For erasure channels, the effect of packet erasures should be reflected on the geometric structure of the problem. Let $i$ and $k$ be two receivers having the same Wants set size but $p_i > p_k$. Consequently, $i$ will require on average more targeting attempts compared to $k$ in order to deplete its Wants set. Since we assume that erasure probabilities do not change during the transmission of a frame, targeting $k$ and ignoring $i$ is expected to result in a higher overall completion delay, especially when $\widetilde{\psi}_i(s)$ is among the largest value in $\boldsymbol{\widetilde{\psi}}(s)$. According to these facts and the intuition explained above for the erasure-free case, $i$ should be given a higher priority of service than $k$.\ignore{ to reduce its higher impact on the completion delay. This can be done by representing the $i$-th coordinate further from $\Psi_0$ than that of the $k$-th when using weighting through norms.}

To implement the above prioritization, we define a channel weighted Wants vector $\boldsymbol{\widetilde{\psi}}(s) = \left[\widetilde{\psi}_1(s),\dots,\widetilde{\psi}_M(s)\right] $, where $\widetilde{\psi}_i(s) = \frac{\psi_i(s)}{q_i}$. Based on this new vector definition, we can redefine our space such that its points $\Psi$ are identified by the coordinates of the vectors $\boldsymbol{\widetilde{\psi}}(s)$ instead of $\boldsymbol{\psi}(s)~\forall~s\in\mathcal{S}$. In this case, the actions move the system within hyper-rectangles $\Gamma'(s)$ with sides equal to $q_i^{-1}$ in the $i$-th dimension. In other words:
\begin{equation}
\Gamma'(s) = \bigg\{\Psi' \: \big| \: \psi_i(s) - \psi_i(s') \in \{0,q_i^{-1}\} ~~\forall~i\in\mathcal{M}\bigg\}\;.
 \end{equation}
The sender should then take the action that can reach successor states with minimum $L_n$ norm over this geometry.

From the above observations, we can draw a conclusion that the policies, which efficiently reduce the IDNC completion delay, should always aim at each visited state $s$ to reach a state $s'$ that is located at the point with minimum distance to the absorbing point $\Psi_0$ and thus the minimum $\left\|\widetilde{\boldsymbol{\psi}}_i(s')\right\|_n$. Consequently, the receivers with larger values of $\widetilde{\psi}_i$ will have higher priority to be selected for transmission at state $s$. Now, if we can show that this norm based selection of the receivers also maximizes the coding opportunities in the successor states, then this norm based selection policy is our searched policy as explained in \sref{sec:properties}. To investigate this point, we will study the evolution of the coding opportunities in the IDNC graph with respect to the selection of receivers in each transmission in the next section.

\section{Evolution of Coding Opportunities} \label{sec:graph-evolution}
As stated in \sref{sec:properties}, one major factor that identifies the efficiency of an action in reducing the completion delay is its ability to maximize the coding opportunities in the primary IDNC graph. We know from the IDNC graph structure that the coding opportunities are represented by its edges. Consequently, the overall number of coding opportunities in the graph are represented by its edge set size.\ignore{ Since the graphs of all the successors $s'$ of a state $s$ are different variants of $\mathcal{G}(s)$, the action at state $s$ that maximizes the expected edge set size in the primary graph will end up at the successor state with more opportunities to reach an absorbing state faster in future steps.} In \cite{TON10-CE}, we studied the receiver selection strategies, maximizing the coding opportunities and density in the IDNC graph for wireless broadcast. This study is done based on ignoring the packet content of the Has, Lacks and Wants sets and computing the expected edge set size given their cardinalities. In this paper, we will extend this study to the case of wireless multicast by first introducing the following theorem\ignore{, presenting an expression for the expected evolution of the edge set size of the primary graph when an arbitrary clique $\kappa$ is chosen for transmission}. In all upcoming analysis in this section, we assume the system is at an arbitrary state $s$ and all the variables represent their values at this state, thus dropping the $(s)$ notation for simplicity.

\begin{theorem} \label{th:graph-density-evolution}
For a given maximal clique $\kappa$, chosen for transmission at time $t$ in a multicast scenario, the expected edge set cardinality of the IDNC primary graph at time $t+1$ is expressed as:
\begin{align}\label{eq:graph-density-evolution}
\mathds{E}\Bigg[\left|\mathcal{E}_\rho^{(t+1)}\right|\Bigg] = \;\;& \mathds{E}\Bigg[\left|\mathcal{E}_\rho^{(t)}\right|\Bigg]- \frac{1}{2} \sum_{i\in\mathcal{T}_\rho(\kappa)} q_i \left(\mathds{E}\left[\Delta_i^{(t)}\right] + \gamma_i\right)  \nonumber\\
& + \frac{1}{2}\sum_{i\in\mathcal{T}(\kappa)}  \psi_i \alpha_i + \frac{1}{2}\sum_{i\notin\mathcal{T}(\kappa)}  \psi_i \beta_i\;,
\end{align}
where
\begin{align}
&\alpha_i = \sum_{\substack{k=1\\k\neq i}}^M q_i \xi_k -\sum_{\substack{k\in\mathcal{T}_\rho(\kappa)\\k\neq i}}\Phi_{ik}(q_i)  + \sum_{\substack{k\in\mathcal{T}_\sigma(\kappa)\\k\neq i}} \Lambda_{ik}(q_i) \;,\label{eq:alpha}\\
&\beta_i = -\sum_{\substack{k\in\mathcal{T}_\rho(\kappa)\\k\neq i}}\Phi_{ik}(0)  + \sum_{\substack{k\in\mathcal{T}_\sigma(\kappa)\\k\neq i}} \Lambda_{ik}(0) \;,\label{eq:beta}\\
&\gamma_i = \sum_{\substack{k=1\\k\neq i}}^M \xi_k -\sum_{\substack{k\in\mathcal{T}_\rho(\kappa)\\k\neq i}}\Phi_{ik}(1)  + \sum_{\substack{k\in\mathcal{T}_\sigma(\kappa)\\k\neq i}} \Lambda_{ik}(1)\;,\label{eq:gamma}\\
&\Phi_{ik}(x) = \frac{q_k}{N} \left( 1 + \frac{\left(\varrho_k-\psi_k+1\right) \left(\varrho_i+x\right)}{N-1}\right) \;,\label{eq:phi}\\
&\Lambda_{ik}(x)  = \frac{q_k \psi_k\left(\varrho_i+x\right)}{N(N-1)}\;,\qquad\xi_k = \frac{\psi_k \varrho_k}{N(N-1)}\;,
\end{align}
and $\mathds{E}\left[\Delta_i^{(t)}\right]$ is the expected degree of a vertex of receiver $i$ at time $t$.
\end{theorem}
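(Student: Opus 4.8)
The plan is to compute $\mathds{E}\big[|\mathcal{E}_\rho^{(t+1)}|\big]$ through the handshake identity $|\mathcal{E}_\rho| = \tfrac12\sum_{i=1}^M d_i$, where $d_i$ denotes the total degree of the primary vertices of receiver $i$. By linearity of expectation this gives $\mathds{E}\big[|\mathcal{E}_\rho^{(t+1)}|\big] = \mathds{E}\big[|\mathcal{E}_\rho^{(t)}|\big] + \tfrac12\sum_i\big(\mathds{E}[d_i^{(t+1)}]-\mathds{E}[d_i^{(t)}]\big)$, so it suffices to evaluate the expected per-receiver degree change induced by one transmission of $\kappa$, with the global factor $\tfrac12$ absorbing the double-counting of every edge from its two endpoints. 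Throughout I would adopt the random packet-content model of \cite{TON10-CE}: conditioned on the cardinalities $\varrho_i,\psi_i,\varphi_i$, the membership of a given packet in $\mathcal{H}_k$, $\mathcal{W}_k$, etc.\ is treated as uniform, so a fixed packet lies in $\mathcal{H}_k$ with probability $\varrho_k/N$, and so on. The only randomness driving the transition is the independent reception of each targeted receiver $i$ (success with probability $q_i$), which deletes one primary vertex when $i\in\mathcal{T}_\rho(\kappa)$ and enlarges $\mathcal{H}_i$ by one packet whenever $i\in\mathcal{T}(\kappa)$.

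Next I would decompose the degree change of receiver $i$'s vertices into two mechanisms. First, the \emph{loss} of the edges incident to $i$'s targeted vertex, which occurs only if $i\in\mathcal{T}_\rho(\kappa)$ and receives (probability $q_i$); this is the origin of the removal term $-\tfrac12\sum_{i\in\mathcal{T}_\rho(\kappa)}q_i(\mathds{E}[\Delta_i]+\gamma_i)$, where $\gamma_i$ reattributes to that vertex the edges it gains or loses from simultaneously changing neighbors before it disappears. Second, the \emph{gain or loss} of edges on the surviving vertices, caused by growth of a neighboring Has set (activating C2 through the clause $j\in\mathcal{H}_i$ or $l\in\mathcal{H}_k$) or by deletion of a shared-packet vertex (through C1). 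Collecting these contributions receiver-by-receiver, split according to whether each other receiver $k$ is primary-targeted, secondary-targeted, or untouched, yields precisely the gain sum $\tfrac12\sum_{i\in\mathcal{T}(\kappa)}\psi_i\alpha_i$ over receivers whose own Has set may grow, and the sum $\tfrac12\sum_{i\notin\mathcal{T}(\kappa)}\psi_i\beta_i$ over untouched receivers whose only changes come from targeted neighbors.

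The heart of the argument is the combinatorial evaluation of the three kernels under the uniform-content model. The quantity $\xi_k$ is the expected number of new C2 edges a generic vertex $v_{il}$ forms with receiver $k$'s vertices when a packet freshly enters $\mathcal{H}_i$: this is $P(\text{the new packet is wanted by } k)\cdot P(l\in\mathcal{H}_k)$, and conditioning on the first event shrinks the second denominator to $N-1$, giving $\xi_k=\tfrac{\psi_k\varrho_k}{N(N-1)}$. Analogously, $\Lambda_{ik}(x)$ aggregates the edges created when a secondary-targeted $k$ receives (probability $q_k$) and grows $\mathcal{H}_k$, measured against $i$'s Has set of effective size $\varrho_i+x$; and $\Phi_{ik}(x)$ aggregates the combined C1/C2 effect of a primary-targeted $k$ — both its vertex deletion and its Has growth — again measured against $\varrho_i+x$. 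The parameter $x$ merely records the state of $i$'s own Has set inside the joint event: $x=1$ when $i$ has itself received (used in $\gamma_i$, the post-transition degree of the vertex being removed), $x=0$ when $i$ is untouched (in $\beta_i$), and $x=q_i$ as the averaged value when $i$ is targeted but its reception is not conditioned (in $\alpha_i$), which is also why the $\xi_k$ term in $\alpha_i$ carries the factor $q_i$.

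I expect the main obstacle to be the \emph{joint} accounting of pairs of targeted receivers. When both $i$ and $k$ are targeted, their set changes happen in the same transmission and are correlated through a single coded packet, so an edge $v_{il}\!-\!v_{kj}$ cannot be handled by treating its two endpoints independently; one must condition on both reception outcomes and then average. This is exactly what the $k\neq i$ correction terms $\Phi_{ik}$ and $\Lambda_{ik}$ inside $\alpha_i,\beta_i,\gamma_i$ are engineered to absorb, and what makes a naive per-vertex expectation incorrect. The bulk of the remaining labor is careful bookkeeping: ensuring every edge is counted once under the global $\tfrac12$, that the C1 contribution (the constant $1$ inside $\Phi_{ik}$) and the C2 contributions are not double-added, and that the $\varrho_i\!\pm\!1$ and $\psi_i\!-\!1$ shifts produced by a reception are consistently propagated into the $N-1$ denominators — the source of the $x$-shifts in $\Phi_{ik}$ and $\Lambda_{ik}$. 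Carrying out these conditional counts and simplifying then produces the stated closed form \eqref{eq:graph-density-evolution}.
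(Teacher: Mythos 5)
Your proposal is correct and follows essentially the same route as the paper's proof: the paper also combines the handshake identity with the uniform random-content (hypergeometric) model of \cite{TON10-CE}, conditions on the vector $\mathbf{X}$ of Bernoulli reception outcomes of the targeted receivers, and organizes the computation through two auxiliary results (a closed form for $\mathds{E}\left[\left|\mathcal{E}_\rho\right|\right]$ and a per-vertex degree-evolution formula yielding $\alpha_i$ and $\beta_i$), with $\gamma_i$ arising exactly as you describe, from the cross term $\mathds{E}_{\mathbf{X}}\left\{X_i\,\mathds{E}\left[\Delta_{i}^{(t+1)}\big|\mathbf{X}\right]\right\}$ where $X_i^2=X_i$ forces the $x=1$ evaluation. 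Your reading of the $x$-shifts ($q_i$, $0$, $1$ in $\alpha_i$, $\beta_i$, $\gamma_i$) and of the kernels $\Phi_{ik}$, $\Lambda_{ik}$, $\xi_k$ matches the paper's derivation, so no gap remains beyond carrying out the bookkeeping you outline.
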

\begin{proof}
The proof can be found in \appref{app:graph-density-evolution}.
\end{proof}
From the above theorem, we can draw the following observations about the maximization of the edge set size.

\subsection{Causal Evolution}
The first term in the right-hand side of \eqref{eq:graph-density-evolution} is the expected edge set cardinality at time $t$. This means that the edge set size at one time depends on its previous size instances and thus the evolution process is causal. Consequently, if the edge set cardinality at $t+1$ was maximized, the future evolutions in visited successor states at times $t+2$ and $t+3$, ... will also benefit from the maximization that occurred at time $t$, which results in the maximization of their edge set size, if the same policy is employed in each step.

\subsection{Vertex Disappearance}
The second term in \eqref{eq:graph-density-evolution} represents an expected reduction in the edge set size due to the possible disappearance of the primary targeted vertices. This disappearance results in the removal of their adjacent edges at time $t$, which is reflected in the $\mathds{E}\left[\Delta_i^t\right]$ term. It also results in the loss of the potential improvement in these degrees if they were kept in the graph, which is reflected in the $\gamma_i$ term. This loss is a natural outcome of the recovery transmission process and is unavoidable. We cannot try to reduce this term by reducing the size of the primary targeted receiver set as this will tend to increase the expected completion delay. However, we can still reduce the effect of this loss component by serving the vertices with smaller degrees. The following theorem compares the expected vertex degrees of two receivers given the sizes of their Has and Wants sets.
\begin{theorem} \label{th:degree-comparison}
If $\psi_i > \psi_h$ and $\varrho_i<\varrho_h$, then $\mathds{E}\left[\Delta_h\right] > \mathds{E}\left[\Delta_i\right]$.
\end{theorem}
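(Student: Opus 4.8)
The plan is to reduce the claim to a direct comparison of closed-form expressions for the two expected degrees, built from the same uniform-random packet-labelling model (with distinct receivers independent) that underlies \thref{th:graph-density-evolution}. First I would write the expected degree of a vertex $v_{ij}$ of receiver $i$ as a sum over the other receivers $k\neq i$ of their C1 and C2 contributions. A packet-$j$ vertex of $i$ gains one C1 neighbour from $k$ with probability $\psi_k/N$ (the chance that $k$ also wants $j$), and in expectation $\xi_k\varrho_i$ C2 neighbours from $k$, since such a neighbour $v_{kl}$ requires $j\in\mathcal{H}_k$, $l\in\mathcal{W}_k$ and $l\in\mathcal{H}_i$, whose probabilities multiply and sum over the $N-1$ candidate packets $l\neq j$ to $\frac{\psi_k\varrho_k}{N(N-1)}\varrho_i=\xi_k\varrho_i$. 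This yields
\begin{equation}
\mathds{E}\left[\Delta_i\right] = \sum_{k\neq i}\left(\frac{\psi_k}{N}+\xi_k\varrho_i\right),
\end{equation}
which is exactly the degree quantity feeding \thref{th:graph-density-evolution} and may be cited from its proof.

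With this in hand I would form $\mathds{E}\left[\Delta_h\right]-\mathds{E}\left[\Delta_i\right]$ while, crucially, keeping the two index sets $k\neq h$ and $k\neq i$ separate rather than collapsing them into one closed form. For the common indices $k\notin\{i,h\}$ the C1 terms cancel and the C2 terms leave $(\varrho_h-\varrho_i)\xi_k$, summing to $(\varrho_h-\varrho_i)\sum_{k\notin\{i,h\}}\xi_k\geq 0$ because $\varrho_h>\varrho_i$ and every $\xi_k\geq 0$. The two boundary indices — $k=i$, present only in the $h$-sum, and $k=h$, present only in the $i$-sum — contribute $\left(\frac{\psi_i}{N}+\xi_i\varrho_h\right)-\left(\frac{\psi_h}{N}+\xi_h\varrho_i\right)$. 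Using the identity $\xi_i\varrho_h-\xi_h\varrho_i=\frac{\varrho_i\varrho_h(\psi_i-\psi_h)}{N(N-1)}$, this collapses to $\frac{\psi_i-\psi_h}{N}+\frac{\varrho_i\varrho_h(\psi_i-\psi_h)}{N(N-1)}$, which is strictly positive since $\psi_i>\psi_h$. Adding the two pieces gives a strictly positive difference, hence $\mathds{E}\left[\Delta_h\right]>\mathds{E}\left[\Delta_i\right]$.

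I expect the main obstacle to be precisely the bookkeeping of these boundary terms. The tempting shortcut is to substitute $\sum_{k\neq i}\xi_k=S_2-\xi_i$ with $S_2=\sum_k\xi_k$ and compare the two resulting closed forms; this produces a residual proportional to $\psi_i\varrho_i^2-\psi_h\varrho_h^2$ whose sign is genuinely ambiguous under the hypotheses, since $\psi_i>\psi_h$ pushes one way while $\varrho_i<\varrho_h$ pushes the other. The resolution is to recognise that this ambiguous $\varrho^2$-type contribution is cancelled exactly by the matching pieces of $(\varrho_h-\varrho_i)S_2$, leaving the manifestly positive cross-term $\frac{\varrho_i\varrho_h(\psi_i-\psi_h)}{N(N-1)}$. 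Organising the difference by pairing the excluded indices, as above, sidesteps the cancellation pitfall entirely and keeps every surviving term sign-definite, so the inequality follows without any delicate estimates.
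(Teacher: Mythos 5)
Your proof is correct and takes essentially the same route as the paper's: both start from the closed-form expected degree $\mathds{E}\left[\Delta_i\right]=\sum_{k\neq i}\frac{\psi_k}{N}\left(1+\frac{\varrho_k\varrho_i}{N-1}\right)$ (\thref{lem:expected-degree}, which your C1/C2 counting reproduces), separate the common indices $k\notin\{i,h\}$ from the boundary pair $k=i$ versus $k=h$, and handle the former with $\varrho_h>\varrho_i$ and the latter with $\psi_i>\psi_h$. The only difference is presentational: you compute the difference explicitly in the $\xi_k$ notation, while the paper expresses the identical decomposition as a one-line inequality chain.
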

\begin{proof}
The proof can be found in \appref{app:degree-comparison}
\end{proof}
Now, if $q_i<q_h$ and $\psi_i>\psi_h$, $\frac{1}{2} q_i\mathds{E}\left[\Delta_i\right] < \frac{1}{2} q_h\mathds{E}\left[\Delta_h\right]$. Consequently, serving receivers with largest Wants sets and erasure probabilities, and smallest Has sets, results in a smaller loss in the resulting edge set size.

\subsection{Degrees of Remaining Vertices}
The third and fourth terms in \eqref{eq:graph-density-evolution} represent the change in the degrees of the remaining vertices in the primary graph, which are quantified by $\psi_i\alpha_i$ and $\psi_i\beta_i$ for the targeted and non-targeted receivers, respectively. The following theorem describes the relation between these two terms.
\begin{theorem} \label{th:alpha-beta}
The increase in the degrees of the remaining vertices of any receiver is larger when it is targeted than when it is not. In other words, $\alpha_i  \geq \beta_i~\forall~i\in\mathcal{M}$.
\end{theorem}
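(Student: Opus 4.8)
The plan is to show directly that $\alpha_i - \beta_i \geq 0$ by substituting the definitions \eqref{eq:alpha} and \eqref{eq:beta} and exploiting the fact that both $\Phi_{ik}$ and $\Lambda_{ik}$ are affine in their argument $x$. Subtracting \eqref{eq:beta} from \eqref{eq:alpha}, the primary-layer contribution collapses to $-\bigl(\Phi_{ik}(q_i)-\Phi_{ik}(0)\bigr)$ and the secondary-layer contribution to $\Lambda_{ik}(q_i)-\Lambda_{ik}(0)$, while the full term $\sum_{k\neq i} q_i\xi_k$ survives unchanged. From \eqref{eq:phi} and the definition of $\Lambda_{ik}$, these increments are each linear in $q_i$: one finds $\Phi_{ik}(q_i)-\Phi_{ik}(0) = \frac{q_i q_k(\varrho_k-\psi_k+1)}{N(N-1)}$ and $\Lambda_{ik}(q_i)-\Lambda_{ik}(0) = \frac{q_i q_k\psi_k}{N(N-1)}$. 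Rewriting $q_i\xi_k = \frac{q_i\,\psi_k\varrho_k}{N(N-1)}$ and factoring out the common positive constant $\frac{q_i}{N(N-1)}$, the sign of $\alpha_i-\beta_i$ is governed entirely by a single bracketed sum over $k\neq i$.

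First I would split that bracketed sum according to the role of each receiver $k$. A non-targeted receiver contributes only $\psi_k\varrho_k\geq 0$; a secondary-targeted receiver $k\in\mathcal{T}_\sigma(\kappa)$ contributes $\psi_k\varrho_k + q_k\psi_k\geq 0$; both are manifestly non-negative. The only potentially negative contributions come from the primary-targeted receivers $k\in\mathcal{T}_\rho(\kappa)$, for which the term is $\psi_k\varrho_k - q_k(\varrho_k-\psi_k+1)$. The crux of the argument is to show this last quantity is non-negative.

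To handle the primary term I would invoke two structural facts: any $k\in\mathcal{T}_\rho(\kappa)$ carries a primary vertex and hence wants at least one packet, so $\psi_k\geq 1$; and $q_k\leq 1$. If $\varrho_k-\psi_k+1\geq 0$, then $q_k(\varrho_k-\psi_k+1)\leq \varrho_k-\psi_k+1$, whence $\psi_k\varrho_k - q_k(\varrho_k-\psi_k+1) \geq \psi_k\varrho_k - \varrho_k + \psi_k - 1 = (\psi_k-1)(\varrho_k+1)\geq 0$. If instead $\varrho_k-\psi_k+1<0$, the subtracted quantity is non-positive, so the contribution is at least $\psi_k\varrho_k\geq 0$. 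In both cases the primary contribution is non-negative, so the whole bracketed sum is, and therefore $\alpha_i\geq\beta_i$ for every $i\in\mathcal{M}$.

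The main obstacle will be the bookkeeping in the first step: correctly pairing the $\xi_k$ term with the targeted sums and verifying that the affine increments of $\Phi_{ik}$ and $\Lambda_{ik}$ all assemble into one $\frac{q_i}{N(N-1)}$-weighted sum. The one genuinely non-mechanical point is remembering to use $\psi_k\geq 1$ for primary-targeted receivers, since that is exactly the hypothesis that lets the otherwise sign-indefinite primary term factor as $(\psi_k-1)(\varrho_k+1)$; without it the term need not be non-negative.
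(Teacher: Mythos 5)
Your proposal is correct and follows essentially the same route as the paper's proof: both compute $\alpha_i-\beta_i$ from \eqref{eq:alpha} and \eqref{eq:beta}, factor out $\tfrac{q_i}{N(N-1)}$, split the surviving sum by receiver role (non-targeted, secondary-targeted, primary-targeted), observe that only the primary term $\psi_k\varrho_k - q_k(\varrho_k-\psi_k+1)$ is potentially negative, and dispose of it using $\psi_k\geq 1$ and $q_k\leq 1$ (the paper argues $\psi_k\varrho_k\geq q_k\varrho_k\geq q_k(\varrho_k-\psi_k+1)$, you argue via the factorization $(\psi_k-1)(\varrho_k+1)$ — the same elementary fact). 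As a minor aside, your secondary-targeted contribution $\psi_k(\varrho_k+q_k)$ is the correct one; the paper's displayed version contains a typo ($q_i$ in place of $q_k$), which affects neither proof.
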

\begin{proof}
The proof can be found in \appref{app:alpha-beta}.
\end{proof}
Now, moving a receiver $i$ from the non-targeted set to the targeted set results in an extra number of edges of $\frac{\psi_i}{2} \left(\alpha_i - \beta_i\right)$. This term is larger when $\psi_i$ is larger, and thus moving a receiver with a larger Wants set to the targeted receiver set adds more edges to the primary graph than moving a receiver with a smaller Wants set. Consequently, a larger increase in the expected edge set size is obtained when targeting the maximum number of receivers having larger Wants sets.
\ignore{since only one of these parameters for any receiver will be reflected in all its vertices in the primary graph (which appears as a multiplication with its Wants set cardinality in \eqref{eq:graph-density-evolution}), a larger increase in the expected edge set evolution is obtained when targeting (i.e. multiplying by $\alpha_i$) the maximum number of receivers having larger Wants sets.}

Another important insight about the values of $\alpha_i$ and $\beta_i$ can be inferred from the analysis of their components $\Phi_{ik}$ and $\Lambda_{ik}$. Since the terms $\sum_{\substack{k\in\mathcal{T}_\rho(\kappa)\\k\neq i}}\Phi_{ik}(q_i)$ and $\sum_{\substack{k\in\mathcal{T}_\rho(\kappa)\\k\neq i}}\Phi_{ik}(0)$ are subtractive terms from $\alpha_i$ and $\beta_i$, respectively, then selecting the receivers with smaller values of $\Phi_{ik}(q_i)$ and $\Phi_{ik}(0)$ to be primary targeted receivers increases the values of $\alpha_i$ and $\beta_i$, respectively. Now, if $q_k<q_h$, $\psi_k > \psi_h$ and $\varrho_k < \varrho_h$, we have:
\begin{align}
& q_k\left(\varrho_k-\psi_k+1\right) < q_h\left(\varrho_h-\psi_h+1\right) \\
\Rightarrow \quad &\Phi_{ik}(q_i) < \Phi_{ih}(q_i) \quad\mbox{and}\quad \Phi_{ik}(0) < \Phi_{ih}(0)
\end{align}
Consequently, the receivers having larger Wants sets and erasure probabilities, and smaller Has sets, have smaller values of $\Phi_{ik}(q_i)$ and $\Phi_{ik}(0)$. In case of equal demand ratios at all receivers (including the broadcast scenarios), the three above conditions are equivalent on average. In other words, the receivers having smaller reception probabilities will on average have larger Wants sets and smaller Has sets. Consequently, maximizing the number of such receivers in the set of primary targeted receiver maximizes the values of both $\alpha_i$ and $\beta_i$.

Since the terms $\sum_{\substack{k\in\mathcal{T}_\sigma(\kappa)\\k\neq i}}\Lambda_{ik}(q_i)$ and $\sum_{\substack{k\in\mathcal{T}_\sigma(\kappa)\\k\neq i}}\Lambda_{ik}(0)$ are additive terms to $\alpha_i$ and $\beta_i$, respectively, then selecting the receivers with larger values of $\Lambda_{ik}(q_i)$ and $\Lambda_{ik}(0)$ to be secondary targeted receivers increases the values of $\alpha_i$ and $\beta_i$, respectively. The values of $\Lambda_{ik}(q_i)$ and $\Lambda_{ik}(0)$ are larger for receivers having larger value of $q_k\psi_k$ and thus having larger Wants sets but lower erasure probabilities.\ignore{ Since $\mathcal{T}_\rho$ is always selected before $\mathcal{T}_\sigma$, and since $\mathcal{T}_\rho$ should include the receivers with having the largest Wants sets and erasure probabilities (with low $\Phi_{ik}(q_i)$ and $\Phi_{ik}(0)$), then the receivers selected for $\mathcal{T}_\sigma$ (i.e. with high $\Lambda_{ik}(q_i)$ and $\Lambda_{ik}(0)$ values), are usually the remaining receivers with largest Wants sets after selecting $\mathcal{T}_\rho$.
that fall in the intermediate ranges of $q_k$ and $\psi_k$. In our algorithm design, we will assume that these receivers are the remaining receivers with largest Wants sets and erasure probabilities, after targeting the receivers having the largest Wants sets and erasure probabilities with primary packets. Although this assumption may not be always valid, we will show in \sref{sec:simulations} that it is sufficient to achieve an outstanding completion delay performance.}

\subsection{Overall Maximization Strategy}
From the above theorems and discussion, we can infer that, after a given transmission, the value of the edge set size in the primary graph is maximized\ignore{ vertex disappearance effect is reduced and the remaining vertex degrees are maximized} by targeting the receivers, having the largest Wants sets and erasure probabilities (i.e. largest $\widetilde{\psi}_i$ values), with primary packets, then targeting the remaining receivers, having the largest Want sets and lower erasure probabilities, with secondary packets. We will refer to this strategy as \emph{the worst receiver layered targeting (WoRLT)} strategy.\ignore{ This strategy both maximizes the coefficients $\alpha_i$ and $\beta_i$ for all $i\in\mathcal{M}$ and multiplies the maximum number of largest $\psi_i$ values with their larger coefficients $\alpha_i$. It also removes the vertices with smaller degrees from the graph. All these effects of the WoRLT strategy maximizes the expected coding opportunities at time $t+1$.}

In the primary graph, the WoRLT strategy is equivalent to a norm minimization
of the channel weighted Wants vector $\boldsymbol{\widetilde{\psi}}$. Indeed, such minimization will result in targeting the maximum number of receivers having the largest Wants sets and erasure probabilities. According to the discussion in \sref{sec:geometry}, this policy perfectly matches the policy bringing the system the closest to the absorbing point $\Psi_0$.

For the secondary graph, the WoRLT strategy is still equivalent to a norm minimization
of the channel weighted Wants vector $\boldsymbol{\widetilde{\psi}}$ within the IDNC secondary subgraph, adjacent to all selected vertices in the primary graph. Since the receivers with the largest Wants sets and erasure probabilities will be targeted with vertices from the primary graph, and since each receiver can have at most one vertex per clique, applying the norm minimization in the secondary subgraph, adjacent to the primary selected vertices, will result in targeting the remaining receivers with largest Wants sets and lower erasure probabilities. This step does not conflict with the policy bringing the process closest to absorption but rather fosters it in future steps. Indeed, serving these receivers with secondary packets will increase the coding opportunities of their remaining primary (and secondary) vertices, which allows them to get served faster in future transmissions, thus bringing the process closest to absorption.

Given the above facts, we conclude this section by stating that the WoRLT strategy is the policy that can efficiently reduce the IDNC completion delay, as claimed in \sref{sec:properties}. We will thus design our proposed algorithms according to this strategy in the following section.

\section{Proposed Algorithms} \label{sec:algorithms}

\subsection{Maximum Weight Clique Selection Algorithm} \label{sec:MWCS}
According to the findings of the previous sections, we propose a two-step maximal clique selection algorithm that should be executed at any visited state $s$. In the first step, the algorithm selects the maximal clique $\kappa_\rho^*(s)$ in the primary graph that targets receivers with larger channel weighted Want set sizes, thus minimizing $\|\widetilde{\psi}_i(s)\|_n$ for the expected successor state and maximizing the number of edges in its graph. To further maximize the number of edges, the same process should be done for the secondary subgraph connected to $\kappa_\rho^*$ to find $\kappa_\sigma^*(s)$. Each of these two steps can be done using a maximum weight clique selection algorithm as follows.

For each vertex $v_{ij}$ in the multicast IDNC graph, we assign a weight $\left(\widetilde{\psi}_i(s)\right)^n$, where $n$ is the order of the selected norm.\ignore{ When a maximum weight clique selection algorithm is executed on the primary graph, the selected maximum weight clique $\kappa_\rho^*$ will target receivers with larger $\left(\widetilde{\psi}_i(s)\right)^n$, thus minimizing $\|\widetilde{\psi}_i(s)\|_n$ for the expected successor state and maximizing the number of edges in its primary graph.} After finding this clique, the secondary subgraph connected to $\kappa_\rho^*$ is extracted and the maximum weight clique selection algorithm is run on it to obtain $\kappa_\sigma^*$. When both cliques are found, the sender sends an IDNC packet that is generated by XORing all the source packets identified by the vertices in both cliques. After receiving the feedback from the receiver, the sender determines the reached successor state and the whole procedure is re-executed. This loop is run until all vertices in the primary graph are depleted.

It is well known that the maximum weight clique selection problem is NP-hard \cite{Garey1979}, and is hard to approximate \cite{Ausiello1999}. On the other hand, there exist several algorithms that solve this problem in polynomial time for moderate size graphs (\cite{Yamaguchi2008} and references therein). However, the complexity of these algorithms may still be prohibitive for the applications of interest in this paper \cite{Yamaguchi2008}. Consequently, we will design a simple heuristic in the next section to solve the problem with much lower complexity.

\subsection{Maximum Weight Vertex Search Algorithm} \label{sec:MWVS}
In this section, we design a simple algorithm that performs clique selection, using a maximum weight vertex search. For this search to be efficient, the vertices' weights must not only reflect the $\left(\widetilde{\psi}_i(s)\right)^n$ values of their inducing receivers, but also their adjacency to vertices having high $\left(\widetilde{\psi}_i(s)\right)^n$.

To design the vertices' weights, we first define $a_{ij,kl}(s)$ as the adjacency indicator of vertices $v_{ij}$ and $v_{kl}$ in $\mathcal{G}(s)$ such that:
\begin{equation} \label{eq:adjacency}
a_{ij,kl}(s) =
\begin{cases}
1 \quad & v_{ij}~\mbox{is connected to}~v_{kl}~\mbox{in}~\mathcal{G}(s)\\
0 \quad &\mbox{otherwise}\;.
\end{cases}
\end{equation}
We then define the weighted degree $\Delta_{ij}(s)$ of vertex $v_{ij}$ as:
\begin{equation} \label{eq:weighted-degree}
\Delta_{ij}^{w}(s) = \sum_{\forall v_{kl}\in\mathcal{G}(s)}\:a_{ij,kl}(s)\: \left(\widetilde{\psi}_k(s)\right)^n\;.
\end{equation}
Thus, a large weighted vertex degree reflects its adjacency to a large number of vertices
belonging to receivers with large values of $\left(\widetilde{\psi}_i(s)\right)^n$. We finally define the vertex weight
$w_{ij}(s)$ as:
\begin{equation} \label{eq:weights}
w_{ij}(s) = \left(\widetilde{\psi}_i(s)\right)^n\: \Delta_{ij}^w(s) \;.
\end{equation}
Consequently, a vertex $v_{ij}$ has a large weight when it both belongs to a receiver with large $\left(\widetilde{\psi}_i(s)\right)^n$ value and is adjacent to a large number of vertices with large $\left(\widetilde{\psi}_k(s)\right)^n$ values.\ignore{ Let $\mathcal{G}^{v}(s)$ be the subgraph in $\mathcal{G}(s)$ only including all vertices adjacent to vertex $v$. We finally define $\mathbf{A}(s)$ and $\mathbf{A}_{v}(s)$ as the adjacency matrices of $\mathcal{G}(s)$ and $\mathcal{G}_{v}(s)$, respectively.}

Based on these definitions, we can introduce our proposed packet selection algorithm as follows. The algorithm operates only for visited states. In each visited state $s$, the algorithm first computes a primary maximal clique $\kappa_\rho^*(s)$ in $\mathcal{G}_\rho(s)$. At first, $\kappa_\rho^*(s)$ and $\kappa_\sigma^*(s)$ are empty sets. The algorithm starts by selecting the maximum weight vertex in $\mathcal{G}_\rho(s)$ to be the source vertex in $\kappa^*(s)$. For each of the following iterations,
the algorithm first recomputes the new vertex weights within the primary subgraph connected to all previously selected vertices in $\kappa_\rho^*(s)$, then adds the new maximum weight vertex to it. When there is no further primary vertices adjacent to all vertices in
$\kappa_\rho^*(s)$, the same process is repeated with the secondary subgraph adjacent to $\kappa_\rho^*$ until no vertices are remaining in the global graph. The final maximal clique $\kappa^*(s)$ is thus the union of $\kappa_\rho^*(s)$ and $\kappa_\sigma^*(s)$. Once this clique is computed, the sender forms and sends an IDNC packet by XORing the source packets identified by the vertices in $\kappa^*(s)$. According to the received feedback, a new state is visited and the process is re-executed until the absorbing state is reached.\ignore{ The pseudo code for the algorithm is depicted in \algref{alg:MWVS}.} Since a maximal clique can have at most $M$ vertices (we can target each receiver at most once per transmission), and since each iteration in the algorithm requires weight computations for the $O(MN)$ graph vertices, the complexity of the algorithm is $O(M^2N)$.
\ignore{
\begin{algorithm}[t]
\begin{algorithmic}
\REQUIRE $\mathbf{F}(s)$ and $\boldsymbol{\widetilde{\psi}}(s)$
\STATE Initialize $\mathcal{K}^*(s) = \varnothing$.
\STATE Construct $\mathcal{G}(s)$ and $\mathbf{A}(s)$.
\WHILE{$\mathcal{G}_\rho(s) \neq \varnothing$}
\STATE Compute $\Delta^w_{ij}(s)$ and $w_{ij}(s)$ using \eqref{eq:weighted-degree} and \eqref{eq:weights}.
\STATE Select $v^* = \arg\max_{v_{ij}\in\mathcal{G}_\rho(s)} \left\{w_{ij}(s)\right\}$.
\STATE Add $v^*$ to $\kappa_\rho^*(s)$
\STATE Set $\mathcal{G}(s)~\leftarrow~\mathcal{G}^{v^*}(s)$ and
$\mathbf{A}(s)~\leftarrow~\mathbf{A}^{v^*}(s)$
\ENDWHILE
\WHILE{$\mathcal{G}(s) \neq \varnothing$}
\STATE Compute $\Delta^w_{ij}(s)$ and $w_{ij}(s)$ using \eqref{eq:weighted-degree} and \eqref{eq:weights}.
\STATE Select $v^* = \arg\max_{v_{ij}\in\mathcal{G}(s)} \left\{w_{ij}(s)\right\}$.
\STATE Add $v^*$ to $\kappa_\sigma^*(s)$
\STATE Set $\mathcal{G}(s)~\leftarrow~\mathcal{G}^{v^*}(s)$ and
$\mathbf{A}(s)~\leftarrow~\mathbf{A}^{v^*}(s)$.
\ENDWHILE
\STATE $\kappa^*(s) = \kappa_\rho^*(s) \cup \kappa_\sigma^*(s)$.
\end{algorithmic}
\caption{Maximum Weight Vertex Search Algorithm} \label{alg:MWVS}
\end{algorithm}
}

\section{Simulation Results} \label{sec:simulations}
In this section, we present simulation results comparing the performance of our proposed algorithms to the following algorithms in both multicast and broadcast scenarios:
\begin{itemize}
\item Random clique search algorithm (RND), employed in \cite{ISIT09}, which picks a random clique from the graph for each transmission .
\item Maximum clique selection algorithm (MC), which selects a primary maximum clique $\kappa_\rho^{max}(s)$ in $\mathcal{G}_\rho$ then selects the secondary maximum clique from the secondary subgraph connected to $\kappa_\rho^{max}(s)$ (denoted by MC in figures).
\item Prefect RNC, in which we assume full independence between all transmitted coding coefficient vectors. Thus, this scheme represents the global optimal completion delay in the broadcast scenario.
\end{itemize}
For our proposed algorithm, we consider the $L_1$, $L_2$, $L_3$, $L_5$ and $L_{10}$ norms to test the effect of the selection bias on the algorithms' performance. In our simulations, we assume that different receivers have different packet erasure probabilities and different demand ratios, that change from frame to frame while keeping the average erasure probability ($p$) and average demand ratio ($\mu$) constant.

Figures \ref{fig:D-opt} \ref{fig:M-opt}, \ref{fig:N-opt} and \ref{fig:P-opt} depict the average completion delay performance of the maximum weight clique selection algorithm with different norms and compare it to the different algorithms against the average demand ratio $\mu$ (for $M = 60$, $N=30$, $p=0.15$), the number of receivers $M$ (for $\mu = 0.5$ and $1$, $N=30$, $p=0.15$), the number of packets $N$ (for $\mu = 0.5$ and $1$, $M=60$, $p=0.15$), and the average erasure probability $p$ (for $\mu = 0.5$ and $1$, $M=60$, $N=30$), respectively.

\begin{figure}[t]
\centering
  \includegraphics[width=0.95\linewidth]{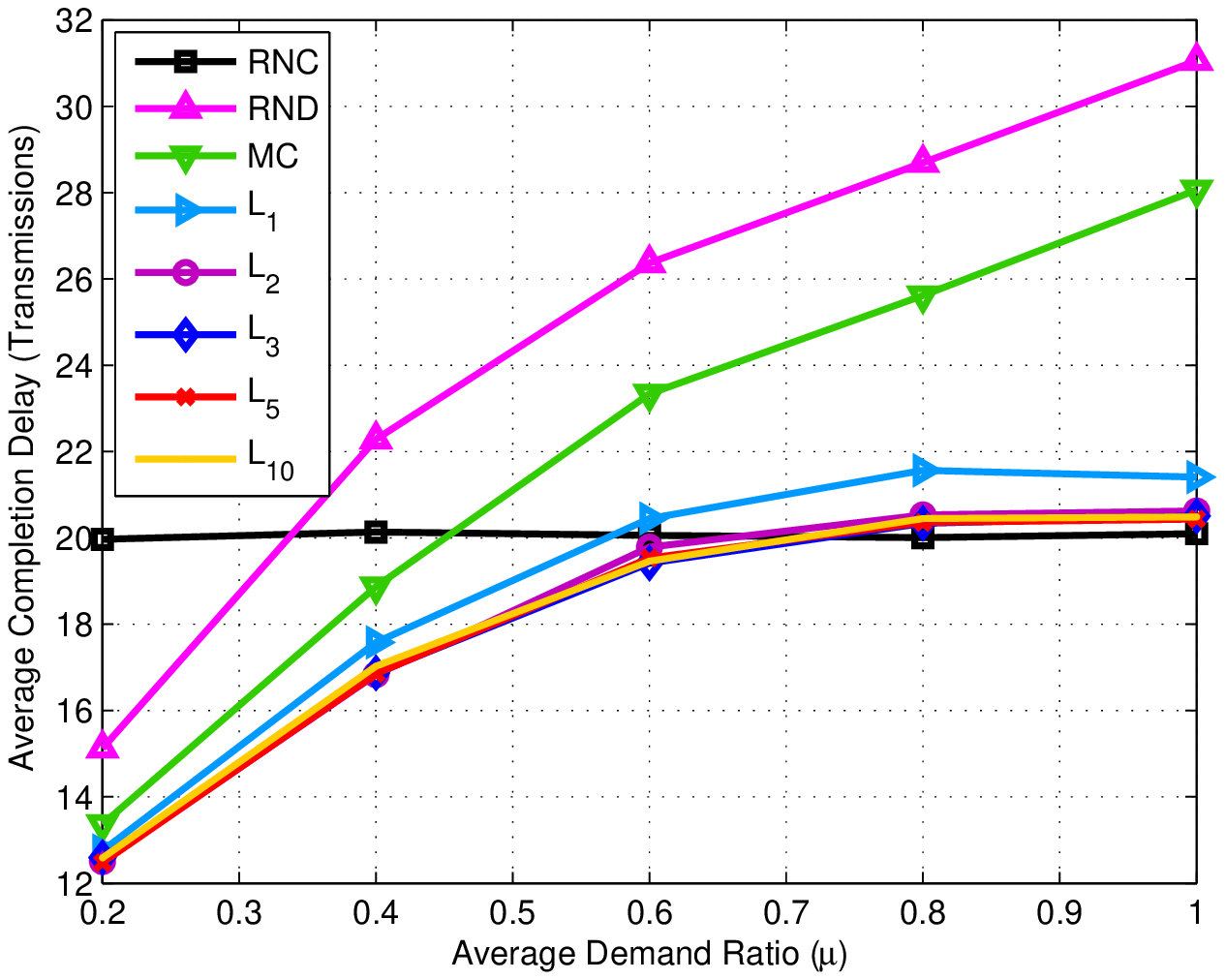}\\
  \caption{Performance comparison of different schemes vs $\mu$}\label{fig:D-opt}
\end{figure}

\begin{figure}[t]
\centering
  \includegraphics[width=0.95\linewidth]{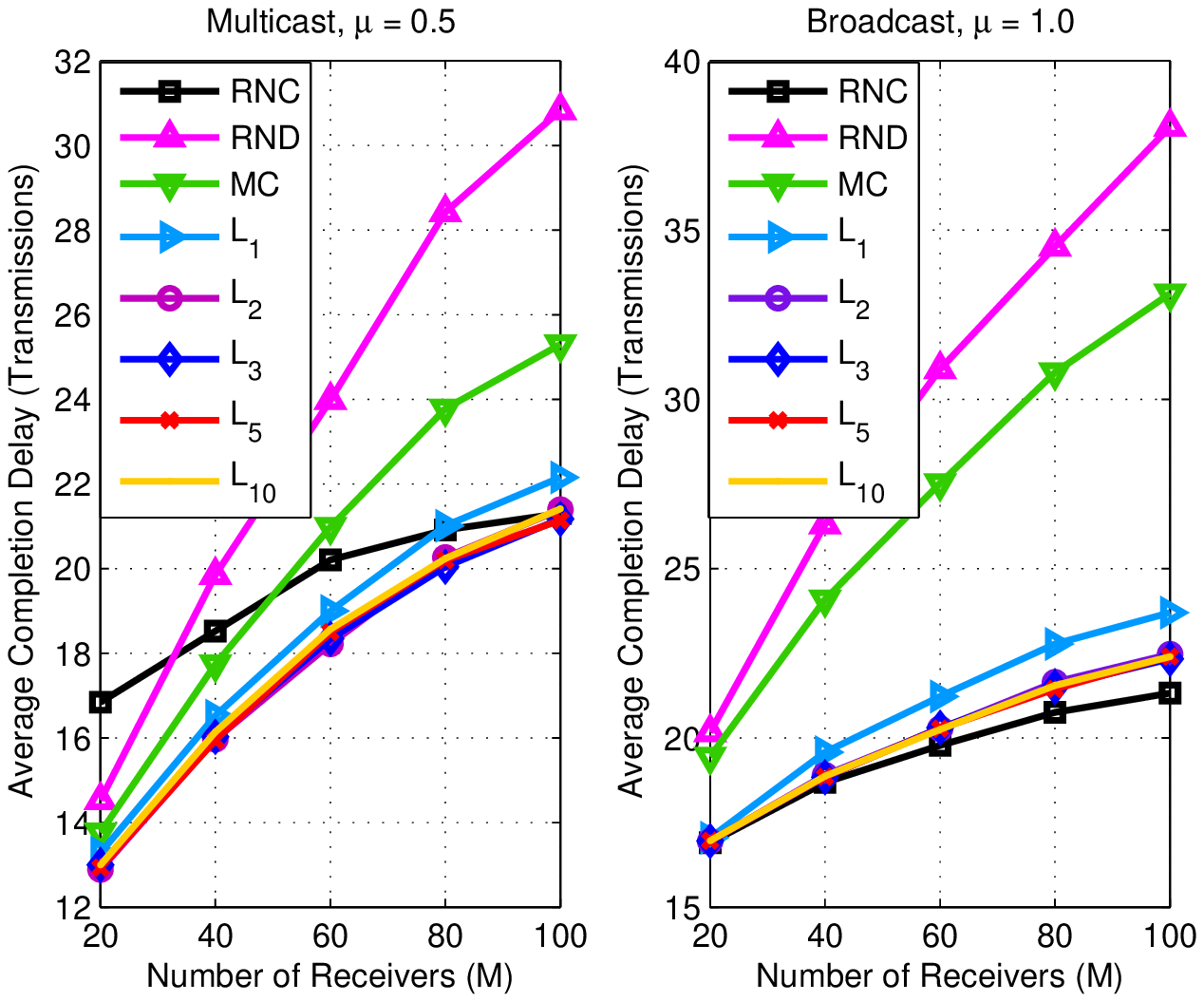}\\
  \caption{Performance comparison of different schemes vs $M$}\label{fig:M-opt}
\end{figure}

\begin{figure}[t]
\centering
  \includegraphics[width=0.95\linewidth]{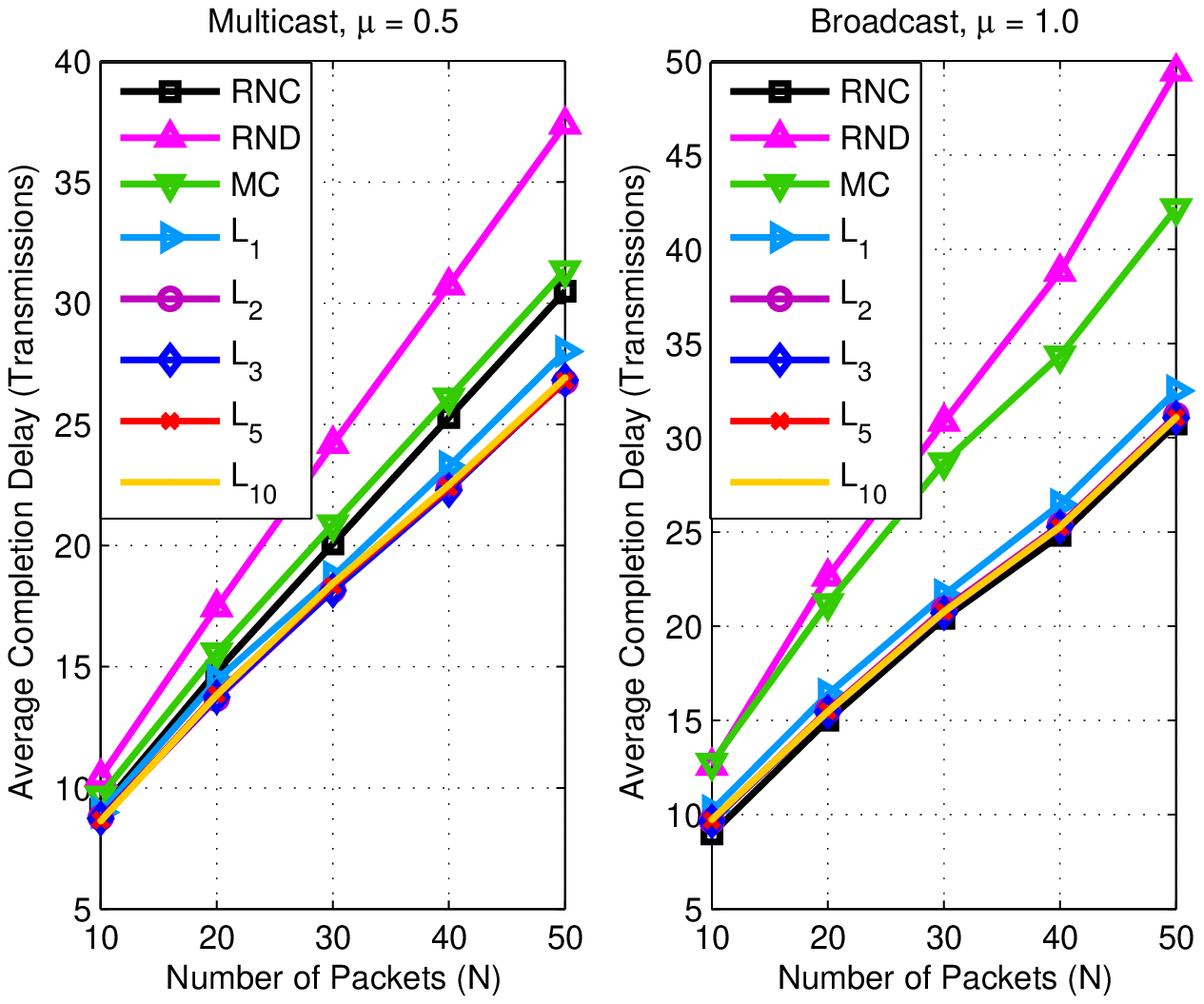}\\
  \caption{Performance comparison of different schemes vs $N$}\label{fig:N-opt}
\end{figure}

\begin{figure}[t]
\centering
  \includegraphics[width=0.95\linewidth]{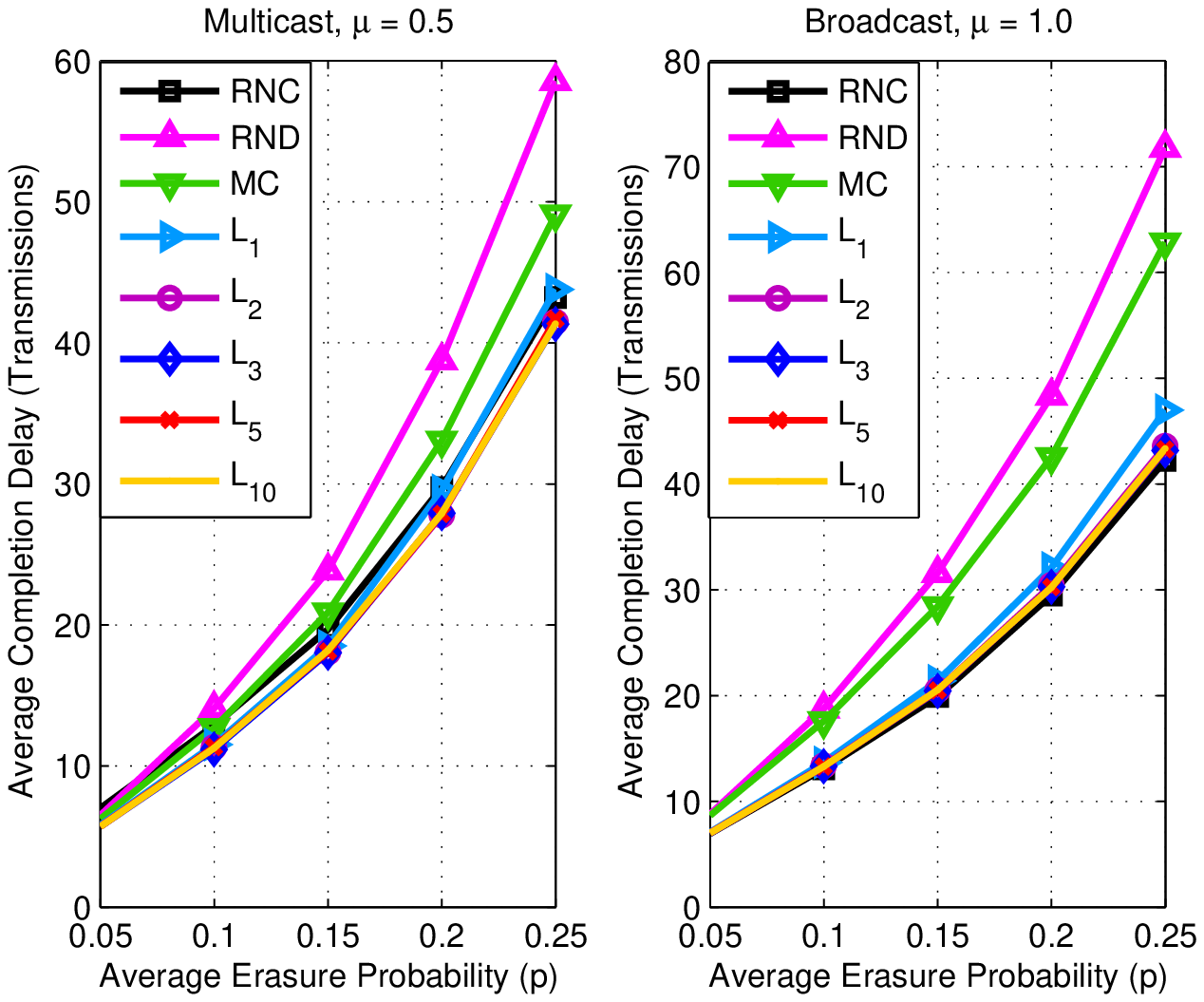}\\
  \caption{Performance comparison of different schemes vs $p$}\label{fig:P-opt}
\end{figure}

From all these figures, we can draw the following observations:
\begin{itemize}
\item Our proposed maximum weight clique selection algorithm with all norms considerably outperforms the RND and MC selection algorithms in terms of average completion delay for all comparison parameters ($\mu, M, N$ and $p$), Note that the MC is the IDNC version of the algorithms proposed in \cite{Wang2010,Gatzianas2010}, which shows the superior performance of our algorithms in the context of completion delay.
\item The $L_1$ norm algorithm employed in \cite{ICC10} degrades in performance, compared to the other norms, since it represents a very loose upper-bound of $V_{\pi^*}(s)$.
\item For norms higher than $1$, the algorithm tends to converge to the same performance with the smallest completion delays achieved by the $L_3$ and $L_5$ algorithms. For greater norms like $L_{10}$, the performance slightly degrades for all comparison parameters.
\item For the broadcast case ($\mu = 1$), results show that our proposed algorithm almost achieves the optimal performance of random network coding for all comparison parameters, with a maximum degradation of less than $5\%$ only occurring at very high number of receivers. This near-optimal performance is achieved while fully preserving the benefits of IDNC compared to perfect RNC.
\end{itemize}

Figures \ref{fig:D-srh}, \ref{fig:M-srh}, \ref{fig:N-srh} and \ref{fig:P-srh} compare the average completion delay of our proposed optimal maximum weight clique selection (denoted by ``opt") to that of our proposed maximum weight vertex search (denoted by ``srh'') algorithm for $L_3$, $L_5$ and $L_{10}$, as well as the maximum clique algorithm, using the same simulation parameters in Figures \ref{fig:D-opt} \ref{fig:M-opt}, \ref{fig:N-opt} and \ref{fig:P-opt}, respectively. For the MC approach, the heuristic algorithm is the same as the one described in \sref{sec:MWVS}, in which the $\left(\widetilde{\psi}_i\right)^n$ value of a vertex is replaced by its absolute primary degree.

\begin{figure}[t]
\centering
  \includegraphics[width=0.95\linewidth]{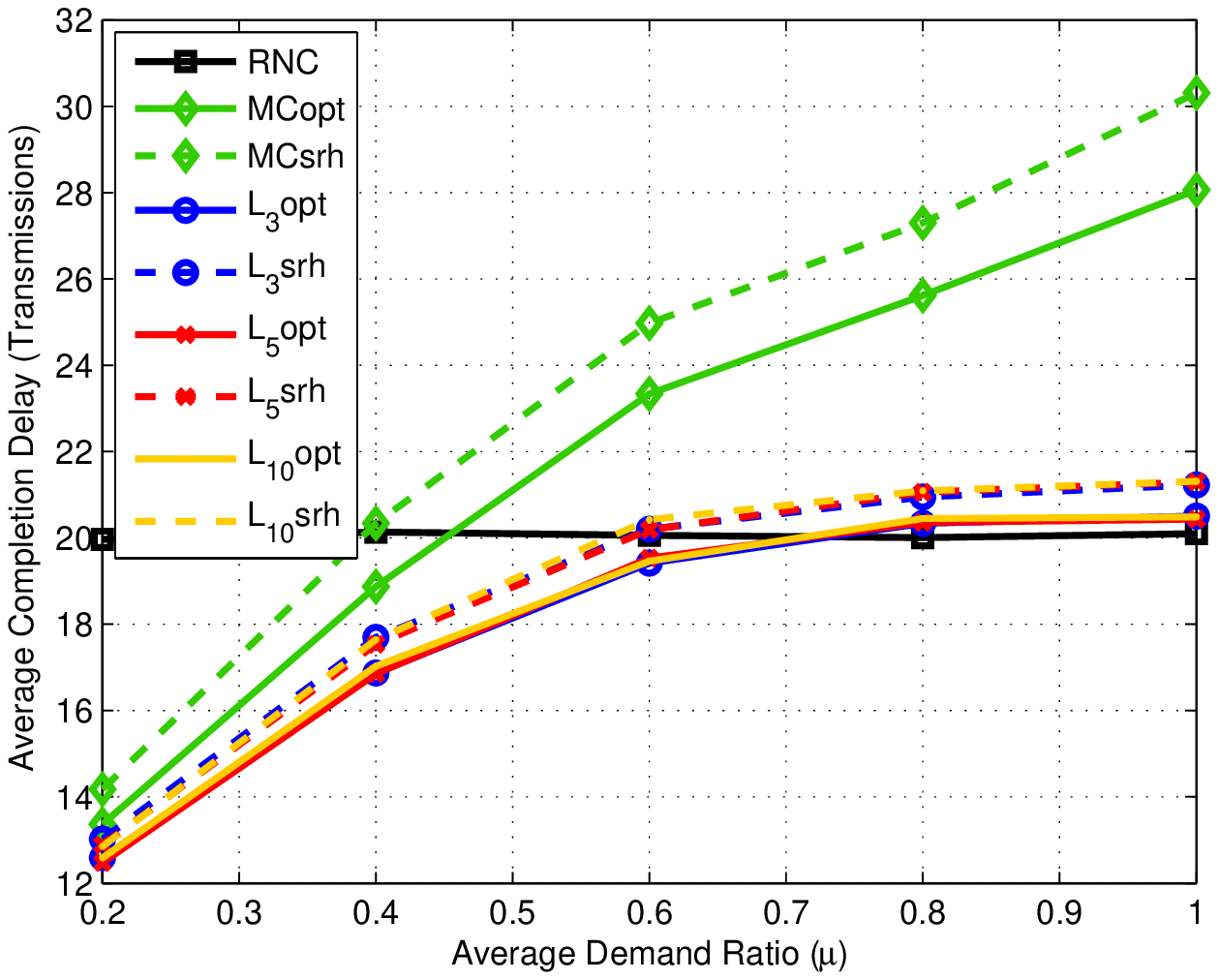}
  \caption{Performance comparison of optimal and heuristic clique search algorithms vs $\mu$}\label{fig:D-srh}
\end{figure}

\begin{figure}[t]
\centering
  \includegraphics[width=0.95\linewidth]{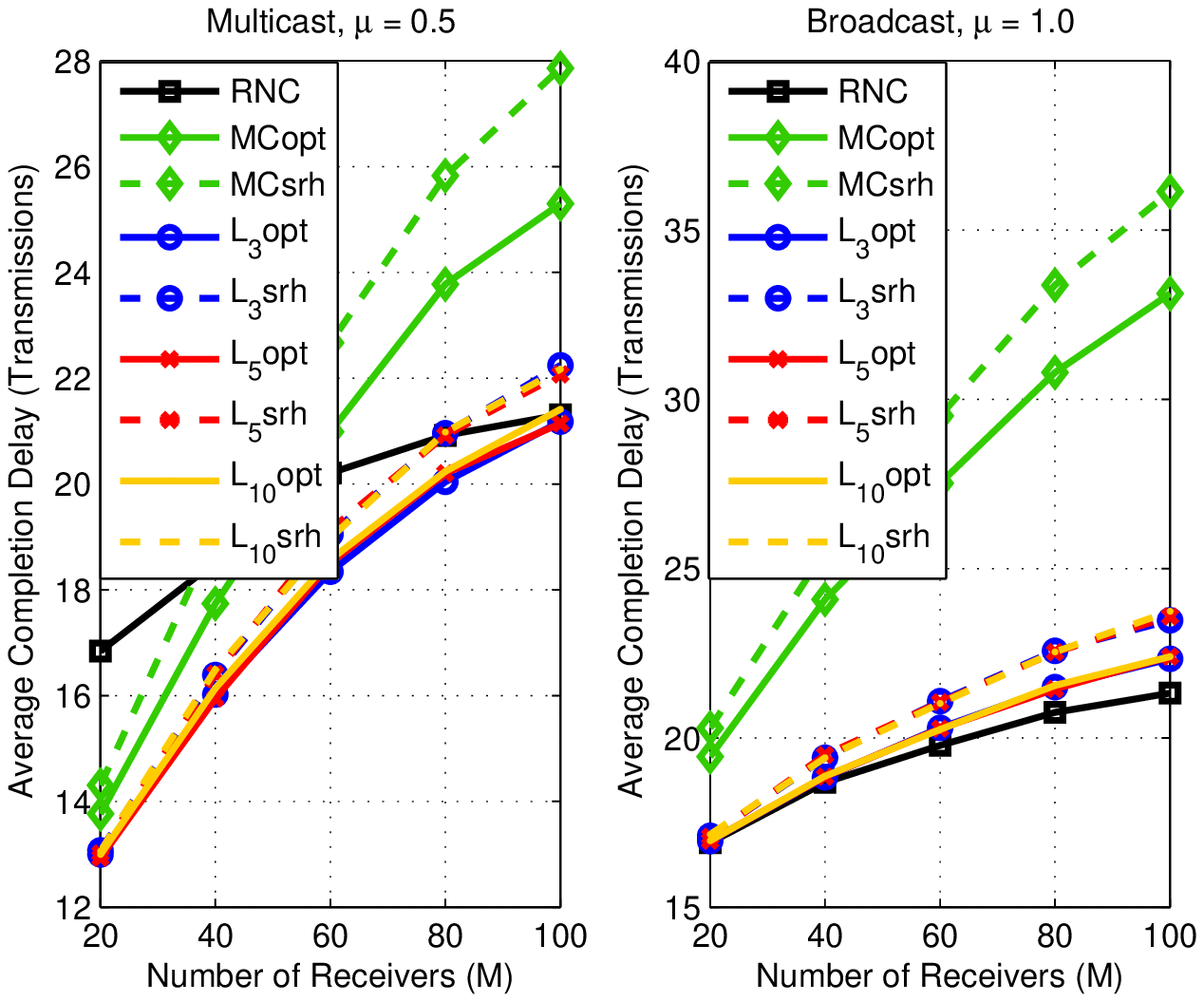}
  \caption{Performance comparison of optimal and heuristic clique search algorithms vs $M$}\label{fig:M-srh}
\end{figure}

\begin{figure}[t]
\centering
  \includegraphics[width=0.95\linewidth]{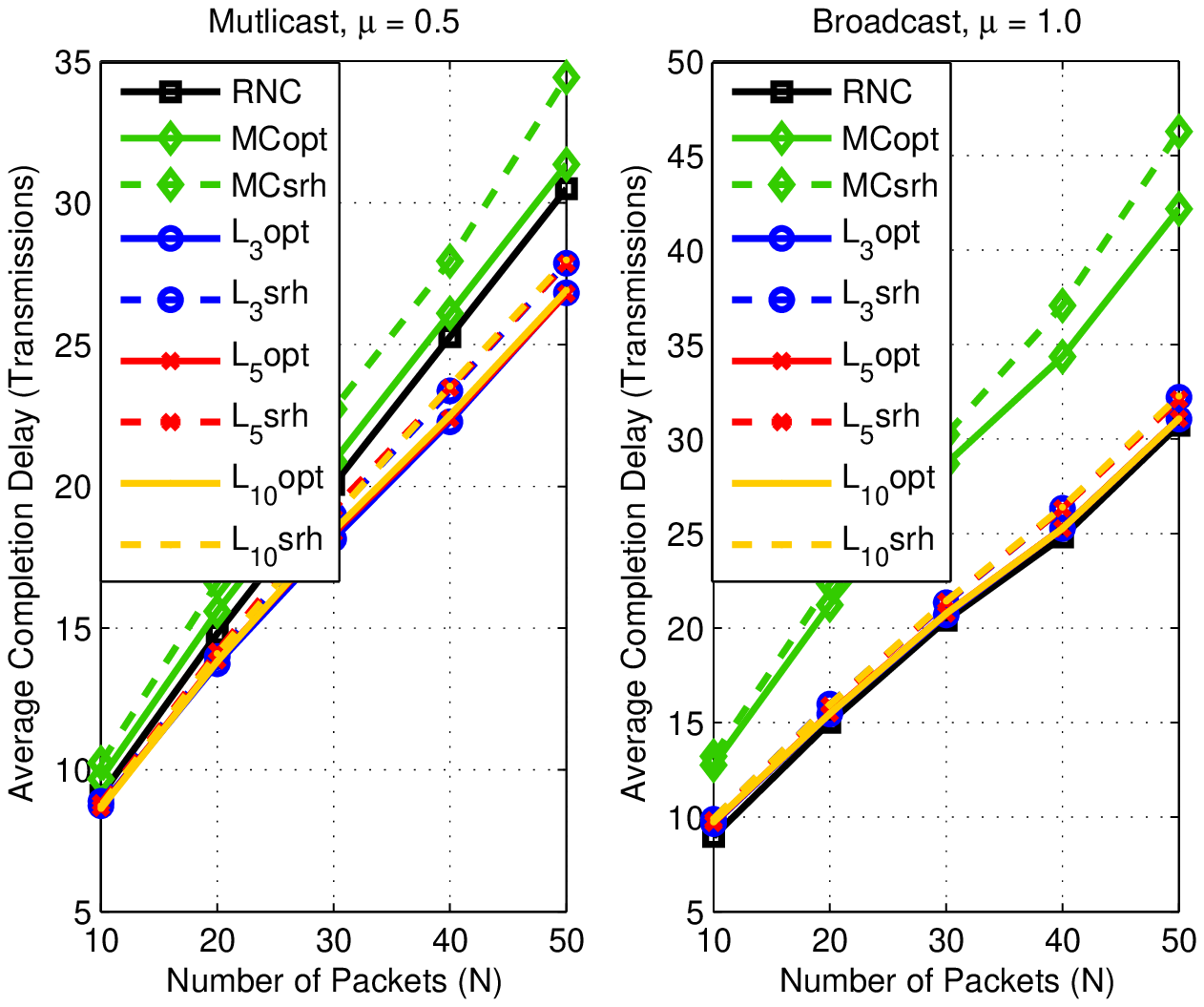}
  \caption{Performance comparison of optimal and heuristic clique search algorithms vs $N$}\label{fig:N-srh}
\end{figure}

\begin{figure}[t]
\centering
  \includegraphics[width=0.95\linewidth]{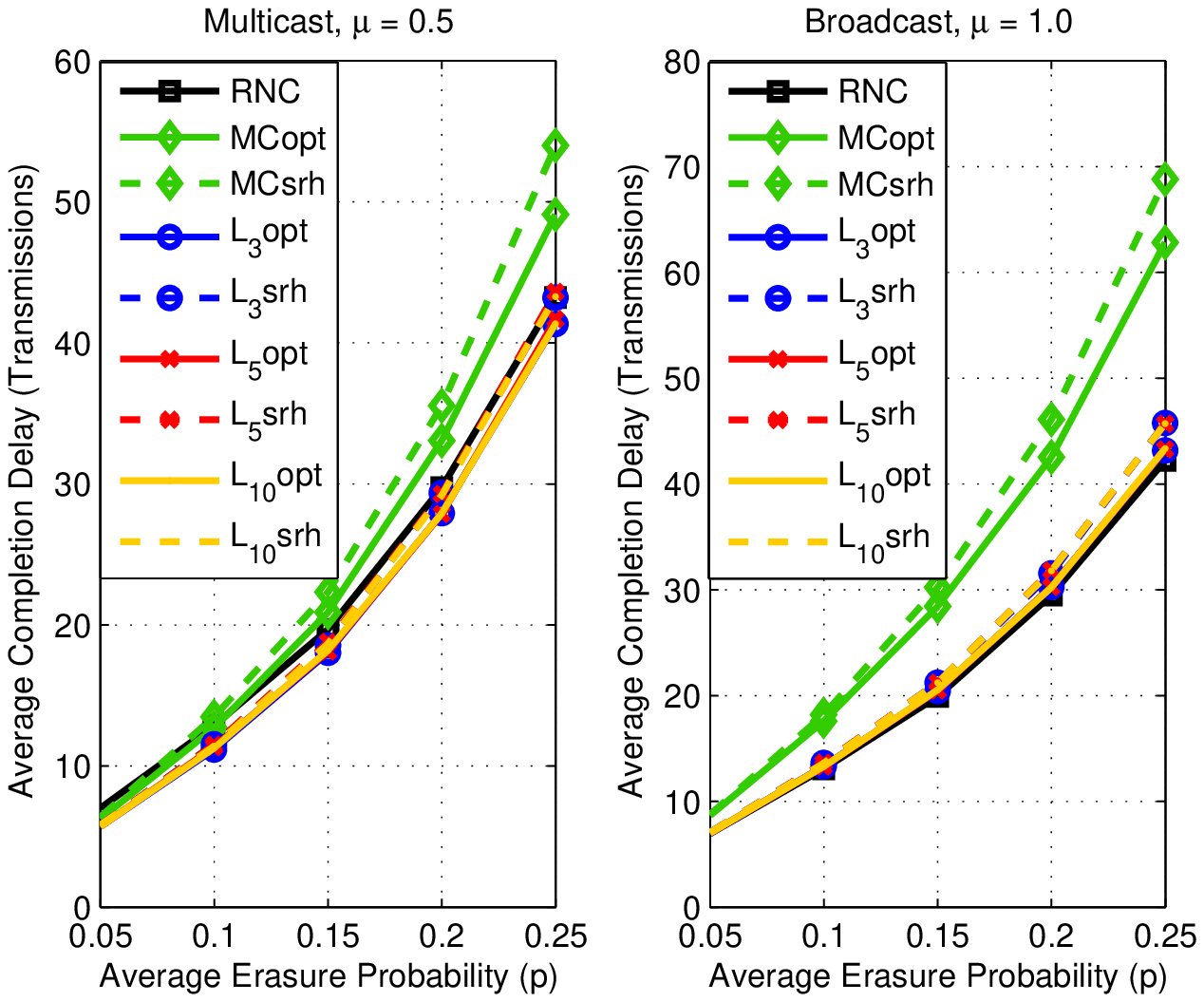}
  \caption{Performance comparison of optimal and heuristic clique search algorithms vs $p$}\label{fig:P-srh}
\end{figure}

For all these figures, we can see that the heuristic algorithms perform very closely to the optimal clique selection algorithms for all norms and all comparison parameters, with a maximum degradation of less than $5\%$ only occurring for large numbers of receivers. We can also observe a considerable improvement of our proposed heuristic algorithm with all norms compared to both the optimal and heuristic MC algorithms.

\section{Conclusion} \label{sec:conclusion}
In this paper, we studied the problem of minimizing the completion delay in wireless multicast and broadcast scenarios for IDNC.\ignore{ We first extended the IDNC graph representation to efficiently operate in both multicast and broadcast scenarios. We then} We formulated the minimum completion delay problem for IDNC as a stochastic shortest path (SSP) problem and showed that it is intractable. Nonetheless, we were able to draw the theoretical properties of the policies, which can efficiently reduce the IDNC completion delay, using the properties of the formulated SSP and the nature of the IDNC graph evolution. Based on these properties, we designed an efficient IDNC algorithm using a maximum weight clique selection algorithm, which can be solved in polynomial time. For further complexity reduction, we also designed a quadratic time heuristic algorithm, which can operate in real-time applications. Simulation results showed that our proposed heuristic can both efficiently track the optimal performance and outperform the random and maximum clique algorithms. For the broadcast case, simulations also showed that our proposed algorithms almost achieve the globally optimal performance of random network coding, while preserving all the benefits and simplicity of IDNC.
\ignore{
The only remaining challenge against making our IDNC algorithm a very efficient yet very simple solution to implement in real networks is its need for full and accurate feedback after each recovery transmission for IDNC graph update. In our future work, we aim to study the problem of optimizing the average completion delay for IDNC in infrequent and lossy feedback environments.
}

\appendices

\ignore{
\section{Proof of \lref{lem:bounds}} \label{app:bounds}
It is known that the optimal value function for state $s$ (i.e. minimum completion delay starting from this state) is lower bounded by the expected completion delay of an imaginary rate optimal approach, which could be expressed as:
\begin{equation} \label{eq:norm-infinity}
\ignore{D_{RO}(s) =  }\mathds{E}\left[\max_{i\in\mathcal{M}}\left\{X_i(s)\right\}\right] = \mathds{E}\left[\|\mathbf{X}(s)\|_\infty\right]
\;,
\end{equation}
where $X_i(s)$ is a random variable representing the number of transmission required by receiver $i$ for its individual completion when starting at state $s$, and $\mathbf{X} = \left[X_1(s),\dots,X_M(s)\right]$. We also know that the optimal value function for state $s$ is upper bounded by the expected value of the sum of individual completion delays of each of the receivers, which could be expressed as:
\begin{equation} \label{eq:norm-one}
\ignore{D_{UB}(s) =  }\mathds{E}\left[\sum_{i\in\mathcal{M}}\left\{X_i\right\}\right] = \sum_{i\in\mathcal{M}}\mathds{E}\left[X_i\right] = \left\|\mathds{E}\left[\mathbf{X}\right]\right\|_1\;,
\end{equation}

Consequently, we have:
\begin{equation}
\left\|\mathds{E}\left[\mathbf{X}(s)\right]\right\|_1 \geq V_{\pi^{*}}(s) \geq \mathds{E}\left[\|\mathbf{X}(s)\|_\infty\right] \geq \left\|\mathds{E}\left[\mathbf{X}(s)\right]\right\|_\infty \;.
\end{equation}
The right most inequality arise from Jensen's inequality, given that the infinity norm is a convex function. Since, the vector $\widetilde{\boldsymbol{\psi}}(s) = \left[\frac{\psi_1(s)}{q_1},\dots,\frac{\psi_M(s)}{q_M}\right] = \mathds{E}\left[\mathbf{X}(s)\right]$, the above inequality can be re-written as:
\begin{equation}\label{eq:value-function-bounds}
\left\|\widetilde{\boldsymbol{\psi}}(s)\right\|_\infty \leq V_{\pi^*}(s) \leq \left\|\widetilde{\boldsymbol{\psi}}(s)\right\|_1\;.
\end{equation}
This concludes the proof.
}

\ignore{
\section{Proof of \thref{th:degree-evolution}} \label{app:degree-evolution}
When the maximal clique $\kappa$ is chosen for transmission at time $t$, each member $k$ of the targeted receiver set $\mathcal{T}(\kappa)$ may (may not) receive the coded packet with probability $q_k$ ($p_k$). Let $X_k$ be the random variable representing the reception of receiver $k\in\mathcal{T}(\kappa)$ at time $t$ and $mathbf{X}$ as the random vector of all such random variables. From \lref{lem:expected-degree}, we can derive the expression of the expected primary degree of receiver $i\in\mathcal{T}(\kappa)$ at time $t+1$, conditioned on the random vector $\mathbf{X}$, as follows:
\begin{align}
&\mathds{E}\left[\Delta_{i\in\mathcal{T}(\kappa)}^{(t+1)}\Big|\mathbf{X}\right] \nonumber\\
&= \sum_{\substack{k\in\mathcal{T}_\rho(\kappa)\\k\neq i}} \frac{\psi_k - X_k}{N}\left(1 + \frac{\left(\varrho_k+X_k\right)\left(\varrho_i + X_i\right)}{N-1}\right) \nonumber\\
& \qquad\quad+ \sum_{\substack{k\in\mathcal{T}_\sigma(\kappa)\\k\neq i}} \frac{\psi_k}{N}\left(1 + \frac{\left(\varrho_k+X_k\right)\left(\varrho_i + X_i\right)}{N-1}\right) \nonumber\\
& \qquad\quad + \sum_{k\notin\mathcal{T}(\kappa)} \frac{\psi_k}{N}\left(1 + \frac{\varrho_k\left(\varrho_i + X_i\right)}{N-1}\right) \nonumber\\
& = \sum_{\substack{k=1\\k\neq i}}^M \frac{\psi_k}{N}\left(1 + \frac{\varrho_k \varrho_i}{N-1}\right) + \sum_{\substack{k=1\\k\neq i}}^M  \frac{\psi_k\varrho_k X_i}{N(N-1)}  \label{eq:reference}\\
& \qquad\quad- \sum_{\substack{k\in\mathcal{T}_\rho(\kappa)\\k\neq i}} \frac{X_k}{N}\left(1 + \frac{\left(\varrho_k -\psi_k + X_k\right)\left(\varrho_i + X_i\right)}{N-1}\right) \nonumber\\
& \qquad\quad+ \sum_{\substack{k\in\mathcal{T}_\sigma(\kappa)\\k\neq i}} \frac{\psi_k X_k \left(\varrho_i + X_i\right)}{N(N-1)}\;.
\end{align}
The first term in \eqref{eq:reference} is obviously the expected vertex degree of receiver $i$ at time $t$. Now, we can derive the expected degree of receiver $i$ after serving the maximal clique $\kappa$ as follows:
\begin{align}
&\mathds{E}\left[\Delta_{i\in\mathcal{T}(\kappa)}^{(t+1)}\right]  = \mathds{E}_{\mathbf{X}}\left\{\mathds{E}\left[\Delta_{i\in\mathcal{T}(\kappa)}^{(t+1)}\Big|\mathbf{X}\right]\right\} \nonumber\\
& = \mathds{E}\left[\Delta_i^{(t)}\right] + \sum_{\substack{k=1\\k\neq i}}^M  \frac{\psi_k\varrho_k \mathds{E}_{\mathbf{X}}\left\{X_i\right\}}{N(N-1)} \nonumber\\
&  - \sum_{\substack{k\in\mathcal{T}_\rho(\kappa)\\k\neq i}} \mathds{E}_{\mathbf{X}}\left\{\frac{X_k}{N} + \frac{\left(X_k\left(\varrho_k - \psi_k\right) + X_k^2\right)\left(\varrho_i + X_i\right)}{N(N-1)}\right\} \nonumber\\
& + \sum_{\substack{k\in\mathcal{T}_\sigma(\kappa)\\k\neq i}} \mathds{E}_{\mathbf{X}}\left\{\frac{\psi_k X_k\left(\varrho_i + X_i\right)}{N(N-1)}\right\} \nonumber\\
& = \mathds{E}\left[\Delta_i^{(t)}\right] + \sum_{\substack{k=1\\k\neq i}}^M q_i \xi_k\ignore{\nonumber \\
&} -\sum_{\substack{k\in\mathcal{T}_\rho(\kappa)\\k\neq i}} \Phi_{ik}(q_i)\ignore{\nonumber \\
&} + \sum_{\substack{k\in\mathcal{T}_\sigma(\kappa)\\k\neq i}} \Lambda_{ik}(q_i)\;.
\end{align}
The expression for $\mathds{E}\left[\Delta_{i\notin\mathcal{T}(\kappa)}^{(t+1)}\right]$  can be derived using the same approach.
}

\section{Proof of \thref{th:graph-density-evolution}} \label{app:graph-density-evolution}
\ignore{To prove this theorem, we first introduce the following lemma, which the multicast variant of Theorem 3 in \cite{TON10-CE}.}
To prove this theorem, we first introduce the following two theorems, proved in Appendices \ref{app:graph-density} and \ref{app:degree-evolution}, respectively.
\begin{theorem} \label{lem:graph-density}
For given $\boldsymbol{\varrho}$, $\boldsymbol{\varphi}$ and $\boldsymbol{\psi}$ vectors, the expected edge set cardinality of the primary graph is equal to:
\begin{equation}
\mathds{E}\left[\left|\mathcal{E}_\rho\right|\right] = \frac{1}{2}\sum_{i=1}^M \psi_i\left\{\sum_{\substack{k = 1\\k\neq i}}^M\: \frac{\psi_k}{N} \left(1+\frac{\varrho_k \varrho_i}{N-1}\right)\right\} \;. \label{eq:expected_edge_set_cardinality}
\end{equation}
\end{theorem}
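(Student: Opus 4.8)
The plan is to compute the expectation by linearity over all pairs of vertices, summing the probability that each pair is joined by an edge. First I would establish two structural facts about $\mathcal{G}_\rho$. \emph{(i)} No edge can join two vertices of the same receiver: condition C1 would force the two vertices to coincide, while C2 would require a wanted packet of $i$ to lie in $\mathcal{H}_i$, contradicting $\mathcal{W}_i\cap\mathcal{H}_i=\varnothing$. \emph{(ii)} For two distinct receivers $i\neq k$, conditions C1 and C2 are mutually exclusive, since C1 ($j=l$) forces the common packet to be wanted by both receivers, whereas C2 requires that same packet to lie in a Has set. Hence the edges between any pair split cleanly into a C1 part and a C2 part, and $\mathds{E}[|\mathcal{E}_\rho|]$ is the sum over unordered pairs $\{i,k\}$ of the expected counts of each type.

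Next I would count the two edge types for a fixed pair $(i,k)$ in terms of set cardinalities. Each packet in $\mathcal{W}_i\cap\mathcal{W}_k$ induces exactly one C1 edge, so the C1 count is $|\mathcal{W}_i\cap\mathcal{W}_k|$. A C2 edge is determined by an independent choice of a vertex $v_{ij}$ with $j\in\mathcal{W}_i\cap\mathcal{H}_k$ and a vertex $v_{kl}$ with $l\in\mathcal{W}_k\cap\mathcal{H}_i$, so the C2 count is $|\mathcal{W}_i\cap\mathcal{H}_k|\cdot|\mathcal{W}_k\cap\mathcal{H}_i|$. I would then impose the probabilistic model underlying the statement — ignoring actual packet identities, each $\mathcal{H}_i$ is a uniformly random $\varrho_i$-subset of $\mathcal{N}$ and each $\mathcal{W}_i$ a uniformly random $\psi_i$-subset of $\mathcal{L}_i$, independently across receivers — and evaluate the two expectations with indicator variables. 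The C1 term is easy: by symmetry a fixed packet lies in $\mathcal{W}_i$ with probability $\psi_i/N$, and cross-receiver independence gives $\mathds{E}[|\mathcal{W}_i\cap\mathcal{W}_k|]=\psi_i\psi_k/N$.

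The C2 term is where the real care is needed, and it is the main obstacle, because the two factors $|\mathcal{W}_i\cap\mathcal{H}_k|$ and $|\mathcal{W}_k\cap\mathcal{H}_i|$ are \emph{not} independent — each involves data from both receivers. Writing the product as a double sum over packet pairs $(p,q)$ and grouping the indicators belonging to receiver $i$ separately from those belonging to $k$ (legitimate by cross-receiver independence), the crucial observation is that the diagonal terms $p=q$ vanish: a single packet cannot be simultaneously in $\mathcal{W}_i$ and $\mathcal{H}_i$. For $p\neq q$, symmetry gives $\Pr[p\in\mathcal{W}_i,\,q\in\mathcal{H}_i]=\psi_i\varrho_i/(N(N-1))$ and likewise for $k$; summing over the $N(N-1)$ off-diagonal pairs yields $\mathds{E}[\text{C2}]=\psi_i\psi_k\varrho_i\varrho_k/(N(N-1))$. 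It is exactly this conditioning away from the diagonal that produces the factor $N-1$ in the denominator.

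Finally I would add the two contributions, so that the expected number of edges between $i$ and $k$ factors as $\tfrac{\psi_i\psi_k}{N}\bigl(1+\tfrac{\varrho_i\varrho_k}{N-1}\bigr)$, then sum over unordered pairs and rewrite the result as the symmetric double sum with the prefactor $\tfrac12$, which is precisely \eqref{eq:expected_edge_set_cardinality}. The only genuinely delicate step is the correlated C2 expectation above; the remaining manipulations are routine applications of linearity and symmetry.
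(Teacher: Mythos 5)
Your proof is correct, and it counts exactly what the paper counts --- for each pair of receivers, the C1 edges, numbering $\left|\mathcal{W}_i\cap\mathcal{W}_k\right|$, plus the C2 edges, numbering $\left|\mathcal{W}_i\cap\mathcal{H}_k\right|\cdot\left|\mathcal{W}_k\cap\mathcal{H}_i\right|$ --- but it reaches the expectation by a different route. The paper's proof in \appref{app:graph-density} is degree-based: it halves the sum of vertex degrees, importing the per-vertex formula \eqref{eq:vertex-degree} of \thref{lem:expected-degree}, namely $\Delta_{ij}=\sum_{k\neq i}\left[I_{j\in\mathcal{W}_k}+I_{j\in\mathcal{R}_k}\left(\psi_k-\left|\mathcal{W}_k\cap\mathcal{L}_i\right|\right)\right]$ (where $\mathcal{R}_k$ is the paper's notation for the Has set of receiver $k$), and it evaluates the correlated term $\mathds{E}\left[I_{j\in\mathcal{R}_k}\cdot\left|\mathcal{W}_k\cap\mathcal{L}_i\right|\right]$ by conditioning on $I_{j\in\mathcal{R}_k}=1$ and explicitly summing a hypergeometric pmf over the remaining $N-1$ packets, of which only $\varphi_i-1$ can fall in $\mathcal{L}_i$; this yields $\varrho_k\psi_k(\varphi_i-1)/\left(N(N-1)\right)$ and, via $\varrho_i=N-\varphi_i$, the stated coefficient. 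You dispense with both the degree lemma and the pmf computation: you expand the product of cardinalities into indicators over ordered packet pairs, factor across receivers by independence, and kill the diagonal using $\mathcal{W}_i\cap\mathcal{H}_i=\varnothing$ --- the exclusion that produces your $N-1$ denominator, playing exactly the role of the paper's exclusion of packet $j$. The two evaluations are mathematically equivalent, so the difference is one of economy and structure: your pair-based argument is self-contained and shorter if this theorem is the only goal, whereas the paper's degree-based organization lets the same degree formula and hypergeometric step be reused in \thref{th:degree-comparison}, \thref{th:degree-evolution} and \thref{th:graph-density-evolution}. One virtue of your write-up worth keeping is the explicit check that C1 and C2 are mutually exclusive for vertices of distinct receivers (so the two counts never double-count an edge); the paper leaves this implicit in the disjointness of the indicators $I_{j\in\mathcal{W}_k}$ and $I_{j\in\mathcal{R}_k}$.
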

\ignore{We also introduce the following theorem proved in \appref{app:degree-evolution}.}
\begin{theorem} \label{th:degree-evolution}
For a given maximal clique $\kappa$, chosen for transmission at time $t$, the expected primary degree of a vertex of receiver $i\in\mathcal{T}(\kappa)$ at time $t+1$ is expressed as:
\begin{equation}
\mathds{E}\left[\Delta_{i\in\mathcal{T}(\kappa)}^{(t+1)}\right]  =  \mathds{E}\left[\Delta_i^{(t)}\right] + \alpha_i \;. \ignore{\sum_{\substack{k=1\\k\neq i}}^M q_i \xi_k\ignore{\nonumber \\
&} -\sum_{\substack{k\in\mathcal{T}_\rho(\kappa)\\k\neq i}} \Phi_{ik}(q_i)\nonumber \\
& + \sum_{\substack{k\in\mathcal{T}_\sigma(\kappa)\\k\neq i}} \Lambda_{ik}(q_i) }\label{eq:degree-evolution-1}
\end{equation}
For $i\notin\mathcal{T}(\kappa)$, it is expressed as:
\begin{equation}
\mathds{E}\left[\Delta_{i\notin\mathcal{T}(\kappa)}^{(t+1)}\right]  = \mathds{E}\left[\Delta_i^{(t)}\right] + \beta_i \;. \ignore{-\sum_{\substack{k\in\mathcal{T}_\rho(\kappa)\\k\neq i}} \Phi_{ik}(0)\ignore{\nonumber \\
&} + \sum_{\substack{k\in\mathcal{T}_\sigma(\kappa)\\k\neq i}} \Lambda_{ik}(0)\;.} \label{eq:degree-evolution-2}
\end{equation}
\ignore{, such that:
\begin{align}
&\Phi_{ik}(x) = \frac{q_k}{N} \left( 1 + \frac{\left(\varrho_k-\psi_k+1\right) \left(\varrho_i+x\right)}{N-1}\right) \\
&\Lambda_{ik}(x)  = \frac{q_k \psi_k\left(\varrho_i+x\right)}{N(N-1)}\\
&\xi_k = \frac{\psi_k \varrho_k}{N(N-1)}\;.
\end{align}}
\end{theorem}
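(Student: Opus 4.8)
The plan is to obtain the time-$(t+1)$ expected degree by conditioning on the random reception pattern of the transmitted clique, evaluating the static (cardinalities-only) degree formula on the resulting updated state, and then averaging over receptions. The first ingredient is the per-vertex degree identity that is implicit in \thref{lem:graph-density}: since the packet-content averaging treats the $\psi_i$ primary vertices of receiver $i$ symmetrically, they share a common expected degree, and the total of their half-edges is $\psi_i$ times that value; comparing with \eqref{eq:expected_edge_set_cardinality} gives
\[
\mathds{E}\left[\Delta_i\right] = \sum_{\substack{k=1\\k\neq i}}^M \frac{\psi_k}{N}\left(1+\frac{\varrho_k\varrho_i}{N-1}\right).
\]
The key feature I would exploit is that this expression involves receiver $i$ only through $\varrho_i$ and never through $\psi_i$ (a vertex of $i$ has no edge to another vertex of $i$), so it stays valid for a \emph{remaining} vertex of $i$ after transmission.

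Next I would introduce the reception indicators. Let $X_k\in\{0,1\}$ mark whether targeted receiver $k$ decodes the coded packet, with $\Pr[X_k=1]=q_k$, independent across $k\in\mathcal{T}(\kappa)$, and set $X_k\equiv 0$ for $k\notin\mathcal{T}(\kappa)$. The one-step cardinality updates are then: a primary-targeted $k$ has $\psi_k\mapsto\psi_k-X_k$ and $\varrho_k\mapsto\varrho_k+X_k$; a secondary-targeted $k$ keeps $\psi_k$ but has $\varrho_k\mapsto\varrho_k+X_k$; a non-targeted $k$ is unchanged; and receiver $i$ itself has $\varrho_i\mapsto\varrho_i+X_i$ (with $X_i\equiv0$ when $i\notin\mathcal{T}(\kappa)$). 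Substituting these into the per-vertex formula, applied to the updated cardinalities, yields $\mathds{E}[\Delta_i^{(t+1)}\mid\mathbf{X}]$ as a sum over $k\neq i$ split across the three receiver types.

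I expect the algebraic reorganization of this conditional sum to be the main obstacle. My approach is to add and subtract, term by term, the ``unchanged'' baseline $\frac{\psi_k}{N}\big(1+\frac{\varrho_k(\varrho_i+X_i)}{N-1}\big)$, so that summing over all $k\neq i$ reproduces $\mathds{E}[\Delta_i^{(t)}]$ plus the single correction $\sum_{k\neq i}\frac{\psi_k\varrho_k X_i}{N(N-1)}$ coming from $\varrho_i\mapsto\varrho_i+X_i$, while the residual differences are supported only on targeted receivers. The delicate cancellation here is that, for a primary-targeted $k$, the product of the $\psi_k-X_k$ and $\varrho_k+X_k$ updates must collapse to $-\frac{X_k}{N}\big(1+\frac{(\varrho_k-\psi_k+X_k)(\varrho_i+X_i)}{N-1}\big)$ (the $\psi_kX_k$ and $\varrho_kX_k$ cross terms partially cancel), whereas for a secondary-targeted $k$ the difference is simply $\frac{\psi_k X_k(\varrho_i+X_i)}{N(N-1)}$.

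Finally I would take the expectation over $\mathbf{X}$, where two elementary facts do all the work. Because each $X_k$ is an indicator, $X_k^2=X_k$, which converts $X_k(\varrho_k-\psi_k+X_k)$ into $X_k(\varrho_k-\psi_k+1)$ and thereby produces exactly the $(\varrho_k-\psi_k+1)$ numerator appearing in $\Phi_{ik}$ in \eqref{eq:phi}. Independence gives $\mathds{E}[X_kX_i]=q_kq_i$ for $k\neq i$, so that $\mathds{E}[\varrho_i+X_i]$ equals $\varrho_i+q_i$ when $i\in\mathcal{T}(\kappa)$ and $\varrho_i$ when $i\notin\mathcal{T}(\kappa)$; this is precisely the argument selection between $\Phi_{ik}(q_i),\Lambda_{ik}(q_i)$ and $\Phi_{ik}(0),\Lambda_{ik}(0)$. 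The baseline correction becomes $\sum_{k\neq i}q_i\xi_k$ (vanishing when $i$ is untargeted since $X_i\equiv0$), the primary-targeted differences become $-\Phi_{ik}(q_i)$ or $-\Phi_{ik}(0)$, and the secondary-targeted differences become $\Lambda_{ik}(q_i)$ or $\Lambda_{ik}(0)$. Assembling these recovers $\alpha_i$ from \eqref{eq:alpha} for $i\in\mathcal{T}(\kappa)$ and $\beta_i$ from \eqref{eq:beta} for $i\notin\mathcal{T}(\kappa)$, which is the claimed statement.
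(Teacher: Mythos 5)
Your proposal is correct and follows essentially the same route as the paper's own proof: you condition on the reception indicator vector $\mathbf{X}$, apply the static per-vertex expected degree formula to the one-step-updated cardinalities, split off the time-$t$ baseline (your add-and-subtract step is exactly the paper's rearrangement, including the collapse of the primary-targeted difference to $-\frac{X_k}{N}\bigl(1+\frac{(\varrho_k-\psi_k+X_k)(\varrho_i+X_i)}{N-1}\bigr)$), and then average using $X_k^2=X_k$ and independence to produce $\Phi_{ik}(q_i)$, $\Lambda_{ik}(q_i)$, $q_i\xi_k$ versus their $x=0$ counterparts. The only cosmetic difference is that you recover the per-vertex degree formula from the edge-set cardinality expression by a symmetry argument, whereas the paper invokes it directly as \thref{lem:expected-degree}, proved separately from the adjacency conditions C1 and C2 (indeed, the paper's logical order is vertex degree first, edge count second, so citing that theorem would be the cleaner route).
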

When the maximal clique $\kappa$ is chosen for transmission at time $t$, each member $k$ of the targeted receiver set $\mathcal{T}(\kappa)$ may (may not) receive the coded packet with probability $q_k$ ($p_k$).\ignore{When the maximal clique $\kappa$ is chosen for transmission at time $t$, the receivers in $\mathcal{T}(\kappa)$ are targeted with the coded packet but are not guaranteed to receive that packet.} Let $X_k$ be the random variable representing the reception of receiver $k\in\mathcal{T}(\kappa)$ at time $t$ and $\mathbf{X}$ as the random vector of all such random variables. From \thref{lem:graph-density}, we can derive the expression of the expected edge set size at time $t+1$, conditioned on the random vector $\mathbf{X}$, as follows:
\ignore{
\begin{align}
&\mathds{E}\left[\mathcal{E}_\rho^{(t+1)}\right] \nonumber \\
& = \frac{1}{2} \sum_{i\in\mathcal{T}(\kappa)} (\psi_i - X_i) \times \nonumber\\
&\qquad\; \left\{ \sum_{\substack{k\in\mathcal{T}(\kappa)_\rho\\k\neq i}} \frac{\psi_k - X_k}{N}\left[1 + (\varrho_k+X_k)\left(1-\frac{\varphi_i - X_i}{N}\right)\right] \right. \nonumber\\
& \qquad\quad+ \sum_{\substack{k\in\mathcal{T}_\sigma(\kappa)\\k\neq i}} \frac{\psi_k}{N}\left[1 + (\varrho_k+X_k)\left(1-\frac{\varphi_i - X_i}{N}\right)\right] \nonumber\\
& \qquad\quad \left.\vphantom{\sum_{\substack{k\in\mathcal{T}_\rho(\kappa)\\k\neq i}} \frac{\psi_k - X_k}{N}\left[1 + (\varrho_k+X_k)\left(1-\frac{\varphi_i - X_i}{N}\right)\right]} + \sum_{k\notin\mathcal{T}(\kappa)} \frac{\psi_k}{N}\left[1 + (\varrho_k)\left(1-\frac{\varphi_i - X_i}{N}\right)\right] \right\}\nonumber
\end{align}
}

\begin{align}
&\mathds{E}\left[\left|\mathcal{E}_\rho^{(t+1)}\right|\Big|\mathbf{X}\right] =  \frac{1}{2} \sum_{i\in\mathcal{T}_\rho(\kappa)} (\psi_i - X_i)\; \mathds{E}\left[\Delta_{i\in\mathcal{T}(\kappa)}^{(t+1)}\Big|\mathbf{X}\right] \nonumber\\
& + \frac{1}{2} \sum_{i\in\mathcal{T}_\sigma(\kappa)} \psi_i \; \mathds{E}\left[\Delta_{i\in\mathcal{T}(\kappa)}^{(t+1)}\Big|\mathbf{X}\right] + \frac{1}{2} \sum_{i\notin\mathcal{T}(\kappa)} \psi_i \; \mathds{E}\left[\Delta_{i\notin\mathcal{T}(\kappa)}^{(t+1)}\Big|\mathbf{X}\right] \nonumber\\
& =  \frac{1}{2} \sum_{i\in\mathcal{T}(\kappa)} \psi_i \;\mathds{E}\left[\Delta_{i\in\mathcal{T}(\kappa)}^{(t+1)}\Big|\mathbf{X}\right] + \frac{1}{2} \sum_{i\notin\mathcal{T}(\kappa)} \psi_i \; \mathds{E}\left[\Delta_{i\notin\mathcal{T}(\kappa)}^{(t+1)}\Big|\mathbf{X}\right] \nonumber\\
&- \frac{1}{2}\sum_{i\in\mathcal{T}_\rho(\kappa)} X_i \; \mathds{E}\left[\Delta_{i\in\mathcal{T}(\kappa)}^{(t+1)}\Big|\mathbf{X}\right] \nonumber \;.
\end{align}
Now, taking the expectation operator over the random vector $\mathbf{X}$, and using the expressions of \thref{th:degree-evolution}, we get the expression of the expected edge set size at time $t+1$ as follows:
\begin{align}
\mathds{E}\left[\left|\mathcal{E}_\rho^{(t+1)}\right|\right] & =  \mathds{E}_{\mathbf{X}} \left\{ \mathds{E}\left[\mathcal{E}_\rho^{(t+1)}\Big|\mathbf{X}\right]\right\} \nonumber \\
\ignore{& =  \frac{1}{2} \sum_{i\in\mathcal{T}(\kappa)} \psi_i\; \mathds{E}_{\mathbf{X}}\left\{\mathds{E}\left[\Delta_{i\in\mathcal{T}(\kappa)}^{(t+1)}\Big|\mathbf{X}\right]\right\}\ignore{ \nonumber\\
&}+ \frac{1}{2} \sum_{i\notin\mathcal{T}(\kappa)} \psi_i \; \mathds{E}_{\mathbf{X}}\left\{\mathds{E}\left[\Delta_{i\notin\mathcal{T}(\kappa)}^{(t+1)}\Big|\mathbf{X}\right]\right\} \nonumber\\
& - \sum_{i\in\mathcal{T}_\rho(\kappa)} \mathds{E}_{\mathbf{X}}\left\{X_i \; \mathds{E}\left[\Delta_{i\in\mathcal{T}(\kappa)}^{(t+1)}\Big|\mathbf{X}\right]\right\} \nonumber\\}
& =   \frac{1}{2} \sum_{i\in\mathcal{T}(\kappa)} \psi_i\; \left\{\mathds{E}\left[\Delta_i^{(t)}\right] + \alpha_i\right\} \nonumber \\
& + \frac{1}{2} \sum_{i\notin\mathcal{T}(\kappa)} \psi_i \; \left\{\mathds{E}\left[\Delta_i^{(t)}\right] + \beta_i\right\} \nonumber\\
& - \sum_{i\in\mathcal{T}_\rho(\kappa)} \mathds{E}_{\mathbf{X}}\left\{X_i \; \mathds{E}\left[\Delta_{i\in\mathcal{T}(\kappa)}^{(t+1)}\Big|\mathbf{X}\right]\right\}
\end{align}
Grouping similar terms and expanding $\mathds{E}\left[\Delta_{i\in\mathcal{T}(\kappa)}^{(t+1)}\Big|\mathbf{X}\right]$, we get:

\begin{align}
& \mathds{E}\left[\left|\mathcal{E}_\rho^{(t+1)}\right|\right] = \mathds{E}\left[\left|\mathcal{E}_\rho^{(t)}\right|\right] +  \frac{1}{2} \sum_{i\in\mathcal{T}(\kappa)} \psi_i \alpha_i + \frac{1}{2} \sum_{i\notin\mathcal{T}(\kappa)} \psi_i \beta_i \nonumber\\
& - \mathds{E}_{\mathbf{X}}\left\{X_i\right\}\mathds{E}\left[\Delta_i^{(t)}\right] + \sum_{\substack{k=1\\k\neq i}}^M  \frac{\psi_k\varrho_k \mathds{E}_{\mathbf{X}}\left\{X_i^2\right\}}{N(N-1)} \nonumber\\
\ignore{
&  - \sum_{\substack{k\in\mathcal{T}_\rho(\kappa)\\k\neq i}} \frac{\mathds{E}_{\mathbf{X}}\{X_i X_k\}}{N} + \left[\frac{\mathds{E}_{\mathbf{X}}\{ X_k\}}{N}\left(\varrho_k - \psi_k\right) + \frac{\mathds{E}_{\mathbf{X}}\{ X_k^2\}}{N}\right]\nonumber \\
& \qquad\qquad\qquad\qquad\qquad\qquad\qquad\qquad\times\;\mathds{E}_{\mathbf{X}}\left\{\frac{\varrho_i X_i + X^2_i}{N-1}\right\} \nonumber\\
}
&  - \sum_{\substack{k\in\mathcal{T}_\rho(\kappa)\\k\neq i}} \mathds{E}_{\mathbf{X}}\left\{\frac{X_i X_k}{N} + \frac{\left(X_k\left(\varrho_k - \psi_k\right) + X_k^2\right)\left(\varrho_i X_i + X^2_i\right)}{N(N-1)}\right\} \nonumber\\
& + \sum_{\substack{k\in\mathcal{T}_\sigma(\kappa)\\k\neq i}} \mathds{E}_{\mathbf{X}}\left\{\frac{\psi_k X_k\left(\varrho_i X_i + X^2_i\right)}{N(N-1)}\right\} \nonumber\\
& = \mathds{E}\left[\left|\mathcal{E}_\rho^{(t)}\right|\right]- \frac{1}{2} \sum_{i\in\mathcal{T}_\rho(\kappa)} q_i \left(\mathds{E}\left[\Delta_i^{(t)}\right] + \gamma_i\right)  \nonumber\\
& + \frac{1}{2}\sum_{i\in\mathcal{T}(\kappa)}  \psi_i \alpha_i + \frac{1}{2}\sum_{i\notin\mathcal{T}(\kappa)}  \psi_i \beta_i\;.
\end{align}

\section{Proof of \thref{th:degree-comparison}} \label{app:degree-comparison}
To prove this theorem, we first introduce the following theorem, proved in \appref{app:expected-degree}\ignore{, which is the multicast variant of Lemma 1 in \cite{TON10-CE}}.
\begin{theorem} \label{lem:expected-degree}
For given $\boldsymbol{\varrho}$, $\boldsymbol{\varphi}$ and $\boldsymbol{\psi}$ vectors, the expected primary degree of any of the vertices induced by receiver $i$ (denoted by $\Delta_{i}$) is equal to:
\begin{equation}
\mathds{E}\left[\Delta_{i}\right] = \sum_{\substack{k = 1\\k\neq i}}^M\: \frac{\psi_k}{N} \left(1+\frac{\varrho_k\varrho_i}{N-1}\right)\;. \label{eq:primary_expected_degree}
\end{equation}
\end{theorem}
Expanding the expression in \eqref{eq:primary_expected_degree}, we get:
\begin{align} \label{eq:degree-comparison}
\mathds{E}\left[\Delta_h\right] =& \sum_{\substack{k = 1\\k\neq i,h}}^M\: \frac{\psi_k}{N} \left(1+\frac{\varrho_k\varrho_h}{N-1}\right) + \frac{\psi_i}{N} \left(1+\frac{\varrho_i\varrho_h}{N-1}\right) \nonumber \\
> &\sum_{\substack{k = 1\\k\neq i,h}}^M\: \frac{\psi_k}{N} \left(1+\frac{\varrho_k\varrho_i}{N-1}\right) \ignore{\nonumber \\
&}+ \frac{\psi_h}{N} \left(1+\frac{\varrho_i\varrho_h}{N-1}\right) \nonumber \\
 =& \mathds{E}\left[\Delta_i\right]\;.
\end{align}

\section{Proof of \thref{th:alpha-beta}} \label{app:alpha-beta}
\begin{align}
\alpha_i = &  \sum_{\substack{k=1\\k\neq i}}^M  \frac{q_i\psi_k \varrho_k}{N(N-1)} + \sum_{\substack{k\in\mathcal{T}_\sigma(\kappa)\\k\neq i}} \frac{q_k \psi_k\left(\varrho_i+q_i\right)}{N(N-1)}\nonumber\\
& -\sum_{\substack{k\in\mathcal{T}_\rho(\kappa)\\k\neq i}}\frac{q_k}{N} \left( 1 + \frac{\left(\varrho_k-\psi_k+1\right) \left(\varrho_i+q_i\right)}{N-1}\right)
\end{align}
Re-arranging the above expression and using the definition of $\beta_i$ in \eqref{eq:beta}, we get:
\begin{align} \label{eq:alpha-beta-proof-I}
\alpha_i  =& \; \beta_i + \sum_{\substack{k\notin\mathcal{T}(\kappa)\\k\neq i}}  \frac{q_i\psi_k \varrho_k}{N(N-1)} + \sum_{\substack{k\in\mathcal{T}_\sigma(\kappa)\\k\neq i}}  \frac{q_i\psi_k \left(\varrho_k+q_i\right)}{N(N-1)} \nonumber\\
& +\sum_{\substack{k\in\mathcal{T}_\rho(\kappa)\\k\neq i}}\frac{q_i\psi_k\varrho_k - q_iq_k\left(\varrho_k-\psi_k+1\right)}{N(N-1)}
\end{align}
Since for $\psi_k>0$, $\varrho_k \geq \varrho_k - \psi_k +1$ and $\psi_k\ > q_k$, the last term in \eqref{eq:alpha-beta-proof-I} is non-negative and the theorem follows.

\section{Proof of \thref{lem:graph-density}} \label{app:graph-density}
It is well known from graph theory that the edge set size of any graph is equal to half the sum of its vertex degrees. Consequently, we will find an expression for the sum of the vertex primary degrees as follows. From \appref{app:expected-degree}, we know that the primary degree of a vertex $v_{ij}$ is expressed as in \eqref{eq:vertex-degree}. Consequently, the sum $\Sigma \Delta_i$ of all the primary degrees of the vertices induced by receiver $i$ can be expressed as:
\begin{align}
&\Sigma \Delta_i = \sum_{j\in\mathcal{W}_i} \sum_{\substack{k=1\\k\neq i}}^M \bigg[I_{j\in\mathcal{W}_k} + I_{j\in\mathcal{R}_k}\left(\left|\mathcal{W}_k\right| - \left|\mathcal{W}_k \cap \mathcal{L}_i\right|\right)\bigg] \nonumber \\
& = \sum_{\substack{k=1\\k\neq i}}^M \left[\left(\sum_{j\in\mathcal{W}_i}I_{j\in\mathcal{W}_k}\right) + \sum_{j\in\mathcal{W}_i}I_{j\in\mathcal{R}_k}\cdot \left(\psi_k - \left|\mathcal{W}_k \cap \mathcal{L}_i\right|\right)\right] \nonumber\\
& = \sum_{\substack{k=1\\k\neq i}}^M \left[\left|\mathcal{W}_i \cap \mathcal{W}_k\right|  + \sum_{j\in\mathcal{W}_i}I_{j\in\mathcal{R}_k}\cdot \left(\psi_k - \left|\mathcal{W}_k \cap \mathcal{L}_i\right|\right) \right]\;.
\end{align}
Now, ignoring the content of the different sets, we can derive an expression for the expected edge set size of the primary graph. Note that\ignore{ when the content of different sets are ignored,} the cardinality of the intersection of any two sets of given sizes, whose elements are unknown but are drawn from the same pool of elements, is a hypergeometric distributed random variable.\ignore{ $\left|\mathcal{W}_i \cap \mathcal{R}_k\right|$ and $\left|\mathcal{W}_k \cap \mathcal{L}_i\right|$ become independent hypergeometric random variables.} Thus, we get:
\begin{align}\label{eq:lem-II-proof-II}
&\mathds{E}\left[\left|\mathcal{E}_\rho\right|\right] = \frac{1}{2}\sum_{i=1}^M \mathds{E}\left[\Sigma \Delta_i\right] \nonumber\\
& = \frac{1}{2}\sum_{i=1}^M \sum_{\substack{k=1\\k\neq i}}^M \left\{ \mathds{E}\left[\left|\mathcal{W}_i \cap \mathcal{W}_k\right|\right]  + \psi_k \mathds{E}\left[\sum_{j\in\mathcal{W}_i}I_{j\in\mathcal{R}_k}\right]\right. \nonumber \\
& \left. \qquad\qquad\qquad - \mathds{E}\left[\sum_{j\in\mathcal{W}_i}I_{j\in\mathcal{R}_k}\cdot\left|\mathcal{W}_k \cap \mathcal{L}_i\right|\right]\right\} \nonumber\\
& = \frac{1}{2} \sum_{i=1}^M \sum_{\substack{k=1\\k\neq i}}^M \left\{\frac{\psi_i\psi_k}{N}  + \frac{\psi_k \psi_i\varrho_k}{N} - \sum_{j\in\mathcal{W}_i} \mathds{E}\left[I_{j\in\mathcal{R}_k} \cdot \left|\mathcal{W}_k \cap \mathcal{L}_i\right|\right]\right\}
\end{align}
Note that the indicator function in the last term expectation can be only zero or one. Consequently, the expectation of its multiplication with $\left|\mathcal{W}_k \cap \mathcal{L}_i\right|$ can be only evaluated for $I_{j\in\mathcal{R}_k} = 1$. If this is the case, this means that packet $j$ cannot be in the intersection of $\mathcal{W}_k$ and $\mathcal{L}_i$. Consequently, this intersection is possible only with the other $\varphi_i-1$ packets and from the set of the remaining $N-1$ packets. Thus, we get:\ignore{ Since the cardinality of the intersection of two sets of given sizes, whose elements are drawn of the same pool of $N-1$ elements, is a hypergeometric random variable, we have:}
\begin{align} \label{eq:lem-I-proof-II}
& \mathds{E}\left[I_{j\in\mathcal{R}_k} \cdot \left|\mathcal{W}_k \cap \mathcal{L}_i\right|\right] = \sum_{n = 1}^{N-1}
n \: \mathds{P}\left[I_{j\in\mathcal{R}_k} = 1, \left|\mathcal{W}_k \cap \mathcal{L}_i\right| = n \right] \nonumber\\
&  = \sum_{n = 1}^{N-1}
n \: \mathds{P}\left[\left|\mathcal{W}_k \cap \mathcal{L}_i\right| = n \Big| I_{j\in\mathcal{R}_k} = 1\right] \cdot \mathds{P}\left[I_{j\in\mathcal{R}_k} = 1\right] \nonumber \\
& = \sum_{n=1}^{N-1} n \frac{\binom{\varphi_i-1}{n}\binom{N-1-\varphi_i+1}{\psi_k - n}}{\binom{N-1}{\psi_k}} \:\frac{\varrho_k}{N} = \frac{\varrho_k \psi_k \left(\varphi_i - 1\right)}{N(N-1)}\;.
\end{align}

Substituting \eqref{eq:lem-I-proof-II} in \eqref{eq:lem-II-proof-II}, we get:
\begin{align}
&\mathds{E}\left[\left|\mathcal{E}_\rho\right|\right] = \frac{1}{2} \sum_{i=1}^M \sum_{\substack{k=1\\k\neq i}}^M \frac{\psi_i\psi_k}{N}  + \frac{\psi_k \psi_i\varrho_k}{N} - \frac{\psi_i \varrho_k \psi_k(\varphi_i-1)}{N(N-1)} \nonumber\\
& = \frac{1}{2} \sum_{i=1}^M \psi_i \left\{ \sum_{\substack{k=1\\k\neq i}}^M \frac{\psi_k}{N} \left[ 1 + \varrho_k \left(1 - \frac{\varphi_i-1}{N-1}\right)\right]\right\} \nonumber\\
& = \frac{1}{2} \sum_{i=1}^M \psi_i \left\{\sum_{\substack{k=1\\k\neq i}}^M \frac{\psi_k}{N} \left(1 + \frac{\varrho_k \varrho_i}{N-1}\right)\right\}\;.
\end{align}

\section{Proof of \thref{th:degree-evolution}} \label{app:degree-evolution}
\ignore{To prove this theorem, we first introduce the following theorem, proved in \appref{app:expected-degree}\ignore{, which is the multicast variant of Lemma 1 in \cite{TON10-CE}}.
\begin{theorem} \label{lem:expected-degree}
For given $\boldsymbol{\varrho}$, $\boldsymbol{\varphi}$ and $\boldsymbol{\psi}$ vectors, the expected primary degree of any of the vertices induced by receiver $i$ (denoted by $\Delta_{i}$) is equal to:
\begin{equation}
\mathds{E}\left[\Delta_{i}\right] = \sum_{\substack{k = 1\\k\neq i}}^M\: \frac{\psi_k}{N} \left(1+\frac{\varrho_k\varrho_i}{N-1}\right)\;. \label{eq:primary_expected_degree}
\end{equation}
\end{theorem}
When the maximal clique $\kappa$ is chosen for transmission at time $t$, each member $k$ of the targeted receiver set $\mathcal{T}(\kappa)$ may (may not) receive the coded packet with probability $q_k$ ($p_k$). Let $X_k$ be the random variable representing the reception of receiver $k\in\mathcal{T}(\kappa)$ at time $t$ and $\mathbf{X}$ as the random vector of all such random variables.}Using \eqref{eq:primary_expected_degree} in \thref{lem:expected-degree}, we can derive an expression for the expected primary degree of receiver $i\in\mathcal{T}(\kappa)$ at time $t+1$, conditioned on the random vector $\mathbf{X}$ defined in \appref{app:graph-density-evolution}, as follows:
\begin{align}
&\mathds{E}\left[\Delta_{i\in\mathcal{T}(\kappa)}^{(t+1)}\Big|\mathbf{X}\right] \nonumber\\
&= \sum_{\substack{k\in\mathcal{T}_\rho(\kappa)\\k\neq i}} \frac{\psi_k - X_k}{N}\left(1 + \frac{\left(\varrho_k+X_k\right)\left(\varrho_i + X_i\right)}{N-1}\right) \nonumber\\
& \qquad\quad+ \sum_{\substack{k\in\mathcal{T}_\sigma(\kappa)\\k\neq i}} \frac{\psi_k}{N}\left(1 + \frac{\left(\varrho_k+X_k\right)\left(\varrho_i + X_i\right)}{N-1}\right) \nonumber\\
& \qquad\quad + \sum_{k\notin\mathcal{T}(\kappa)} \frac{\psi_k}{N}\left(1 + \frac{\varrho_k\left(\varrho_i + X_i\right)}{N-1}\right)
\end{align}
Re-arranging the terms, we get:
\begin{align}
&\mathds{E}\left[\Delta_{i\in\mathcal{T}(\kappa)}^{(t+1)}\Big|\mathbf{X}\right]
= \sum_{\substack{k=1\\k\neq i}}^M \frac{\psi_k}{N}\left(1 + \frac{\varrho_k \varrho_i}{N-1}\right) + \sum_{\substack{k=1\\k\neq i}}^M  \frac{\psi_k\varrho_k X_i}{N(N-1)}  \label{eq:reference}\\
& \qquad\quad- \sum_{\substack{k\in\mathcal{T}_\rho(\kappa)\\k\neq i}} \frac{X_k}{N}\left(1 + \frac{\left(\varrho_k -\psi_k + X_k\right)\left(\varrho_i + X_i\right)}{N-1}\right) \nonumber\\
& \qquad\quad+ \sum_{\substack{k\in\mathcal{T}_\sigma(\kappa)\\k\neq i}} \frac{\psi_k X_k \left(\varrho_i + X_i\right)}{N(N-1)}\;.
\end{align}
The first term in \eqref{eq:reference} is obviously the expected vertex degree of receiver $i$ at time $t$. Now, we can derive the expected degree of receiver $i$ after serving the maximal clique $\kappa$ as follows:
\begin{align}
&\mathds{E}\left[\Delta_{i\in\mathcal{T}(\kappa)}^{(t+1)}\right]  = \mathds{E}_{\mathbf{X}}\left\{\mathds{E}\left[\Delta_{i\in\mathcal{T}(\kappa)}^{(t+1)}\Big|\mathbf{X}\right]\right\} \nonumber\\
& = \mathds{E}\left[\Delta_i^{(t)}\right] + \sum_{\substack{k=1\\k\neq i}}^M  \frac{\psi_k\varrho_k \mathds{E}_{\mathbf{X}}\left\{X_i\right\}}{N(N-1)} \nonumber\\
&  - \sum_{\substack{k\in\mathcal{T}_\rho(\kappa)\\k\neq i}} \mathds{E}_{\mathbf{X}}\left\{\frac{X_k}{N} + \frac{\left(X_k\left(\varrho_k - \psi_k\right) + X_k^2\right)\left(\varrho_i + X_i\right)}{N(N-1)}\right\} \nonumber\\
& + \sum_{\substack{k\in\mathcal{T}_\sigma(\kappa)\\k\neq i}} \mathds{E}_{\mathbf{X}}\left\{\frac{\psi_k X_k\left(\varrho_i + X_i\right)}{N(N-1)}\right\} \nonumber\\
& = \mathds{E}\left[\Delta_i^{(t)}\right] + \sum_{\substack{k=1\\k\neq i}}^M q_i \xi_k\ignore{\nonumber \\
&} -\sum_{\substack{k\in\mathcal{T}_\rho(\kappa)\\k\neq i}} \Phi_{ik}(q_i)\ignore{\nonumber \\
&} + \sum_{\substack{k\in\mathcal{T}_\sigma(\kappa)\\k\neq i}} \Lambda_{ik}(q_i)\;.
\end{align}
The expression for $\mathds{E}\left[\Delta_{i\notin\mathcal{T}(\kappa)}^{(t+1)}\right]$  can be derived using the same approach.

\section{Proof of \thref{lem:expected-degree}} \label{app:expected-degree}
Consider an arbitrary vertex $v_{ij}$ in the graph. From the adjacency conditions C1 and C2 in \sref{sec:GIDNC-graph}, we can conclude the following facts:
\begin{itemize}
\item Vertex $v_{ij}$ is not connected to any vertex of the same receive $i$.
\item If $j\in\mathcal{W}_k$, $v_{ij}$ cannot be adjacent to any primary vertex of receiver $k$ due to violation of C2, except for vertex $v_{kj}$ which arises from C1.
\item If $j\in\mathcal{R}_k$, $v_{ij}$ can be connected to any primary vertex of receiver $k$ (induced from $\mathcal{W}_k$), except for all vertices $v_{kl}$ for which $l\notin\mathcal{R}_i \quad l\in\mathcal{L}_i$.
\end{itemize}
From these facts, we can express the primary degree of a vertex $v_{ij}$ as follows:
\begin{equation} \label{eq:vertex-degree}
\Delta_{ij} = \sum_{\substack{k=1\\k\neq i}}^M \bigg[I_{j\in\mathcal{W}_k} + I_{j\in\mathcal{R}_k}\left(\left|\mathcal{W}_k\right| - \left|\mathcal{W}_k \cap \mathcal{L}_i\right|\right)\bigg]\;
\end{equation}
where $I_x$ is an indicator function, which is equal to one if $x$ is true and zero otherwise.

Now, ignoring the content of the different sets, we can derive the expression for the expected primary degree of a vertex of receiver $i$. Consequently, we get:\
\begin{align} \label{eq:lem-I-proof-I}
& \mathds{E}\left[\Delta_i\right] = \mathds{E}\left[\Delta_{ij}\right] \nonumber\\
& = \sum_{\substack{k=1\\k\neq i}}^M \bigg[\mathds{E}\left[I_{j\in\mathcal{W}_k}\right] + \mathds{E}\left[I_{j\in\mathcal{R}_k}\right]\left|\mathcal{W}_k\right| - \mathds{E}\left[I_{j\in\mathcal{R}_k} \cdot \left|\mathcal{W}_k \cap \mathcal{L}_i\right|\right] \bigg]\nonumber\\
&= \sum_{\substack{k=1\\k\neq i}}^M \left[\frac{\psi_k}{N} + \frac{\varrho_k \psi_k}{N} - \mathds{E}\left[I_{j\in\mathcal{R}_k} \cdot \left|\mathcal{W}_k \cap \mathcal{L}_i\right|\right] \right]\;.
\end{align}
\ignore{Note that the indicator function in the last term expectation can be only zero or one. Consequently, the expectation of its multiplication with $\left|\mathcal{W}_k \cap \mathcal{L}_i\right|$ can be only evaluated for $I_{j\in\mathcal{R}_k} = 1$. If this is the case, this means that packet $j$ cannot be in the intersection of $\mathcal{W}_k$ and $\mathcal{L}_i$. Consequently, this intersection is possible only with the other $\varphi_i-1$ packets and from the set of the remaining $N-1$ packets. Since the cardinality of the intersection of two sets of given sizes, whose elements are drawn of the same pool of $N-1$ elements, is a hypergeometric random variable, we have:
\begin{align} \label{eq:lem-I-proof-II}
& \mathds{E}\left[I_{j\in\mathcal{R}_k} \cdot \left|\mathcal{W}_k \cap \mathcal{L}_i\right|\right] = \sum_{n = 1}^{N-1}
n \: \mathds{P}\left[I_{j\in\mathcal{R}_k} = 1, \left|\mathcal{W}_k \cap \mathcal{L}_i\right| = n \right] \nonumber\\
&  = \sum_{n = 1}^{N-1}
n \: \mathds{P}\left[\left|\mathcal{W}_k \cap \mathcal{L}_i\right| = n \Big| I_{j\in\mathcal{R}_k} = 1\right] \cdot \mathds{P}\left[I_{j\in\mathcal{R}_k} = 1\right] \nonumber \\
& = \sum_{n=1}^{N-1} n \frac{\binom{\varphi_i-1}{n}\binom{N-1-\varphi_i+1}{\psi_k - n}}{\binom{N-1}{\psi_k}} \:\frac{\varrho_k}{N} \nonumber \\
& = \frac{\varrho_k \psi_k \left(\varphi_i - 1\right)}{N(N-1)}\;.
\end{align}}
Substituting \eqref{eq:lem-I-proof-II} in \eqref{eq:lem-I-proof-I} and re-arranging, we get:
\begin{align}
\mathds{E}\left[\Delta_i\right] &= \sum_{\substack{k=1\\k\neq i}}^M \frac{\psi_k}{N} \left[ 1 + \varrho_k \left(1 - \frac{\varphi_i-1}{N-1}\right)\right] \nonumber \\
&= \sum_{\substack{k=1\\k\neq i}}^M \frac{\psi_k}{N} \left( 1 + \frac{\varrho_k \varrho_i}{N-1}\right)\;.
\end{align}

\bibliographystyle{IEEEtran}
\bibliography{IEEEabrv,bibfile}

\begin{IEEEbiography}[{\includegraphics[width=1in,height=1.25in,clip,keepaspectratio]{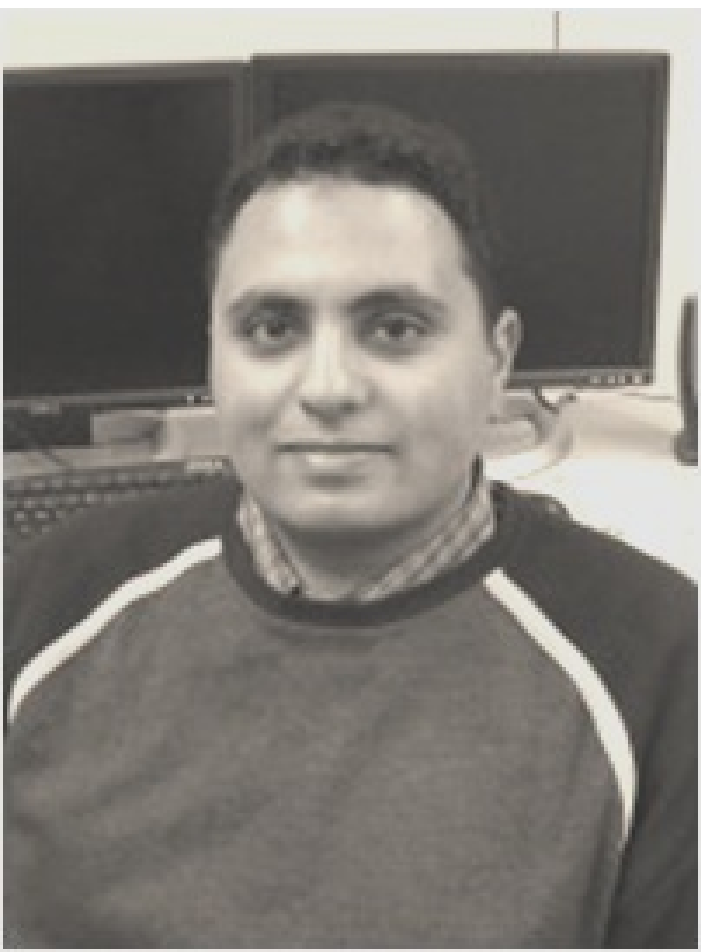}}]{Sameh Sorour} (S '98) received the B.Sc. and M.Sc. degrees in Electrical Engineering from Alexandria University, Egypt, in 2002 and 2006, respectively. He then received his Ph.D from at the Wireless and Internet Research Laboratory (WIRLab), Department of Electrical and Computer Engineering, University of Toronto, Canada. He is currently a Postdoctoral Fellow at University of Toronto. In 2002, he joined the Department of Electrical Engineering, Alexandria
University, where he was a Teaching and Research Assistant for three years and was promoted to Assistant Lecturer in 2006. He is also the chair of local arrangements for IEEE PIMRC 2011. His research interests include opportunistic, random and instantly decodable network coding applications in wireless networks, vehicular and high speed train networks, indoor localization, adaptive resource allocation, OFDMA, and wireless scheduling.
\end{IEEEbiography}

\begin{IEEEbiography}[{\includegraphics[width=1in,height=1.25in,clip,keepaspectratio]{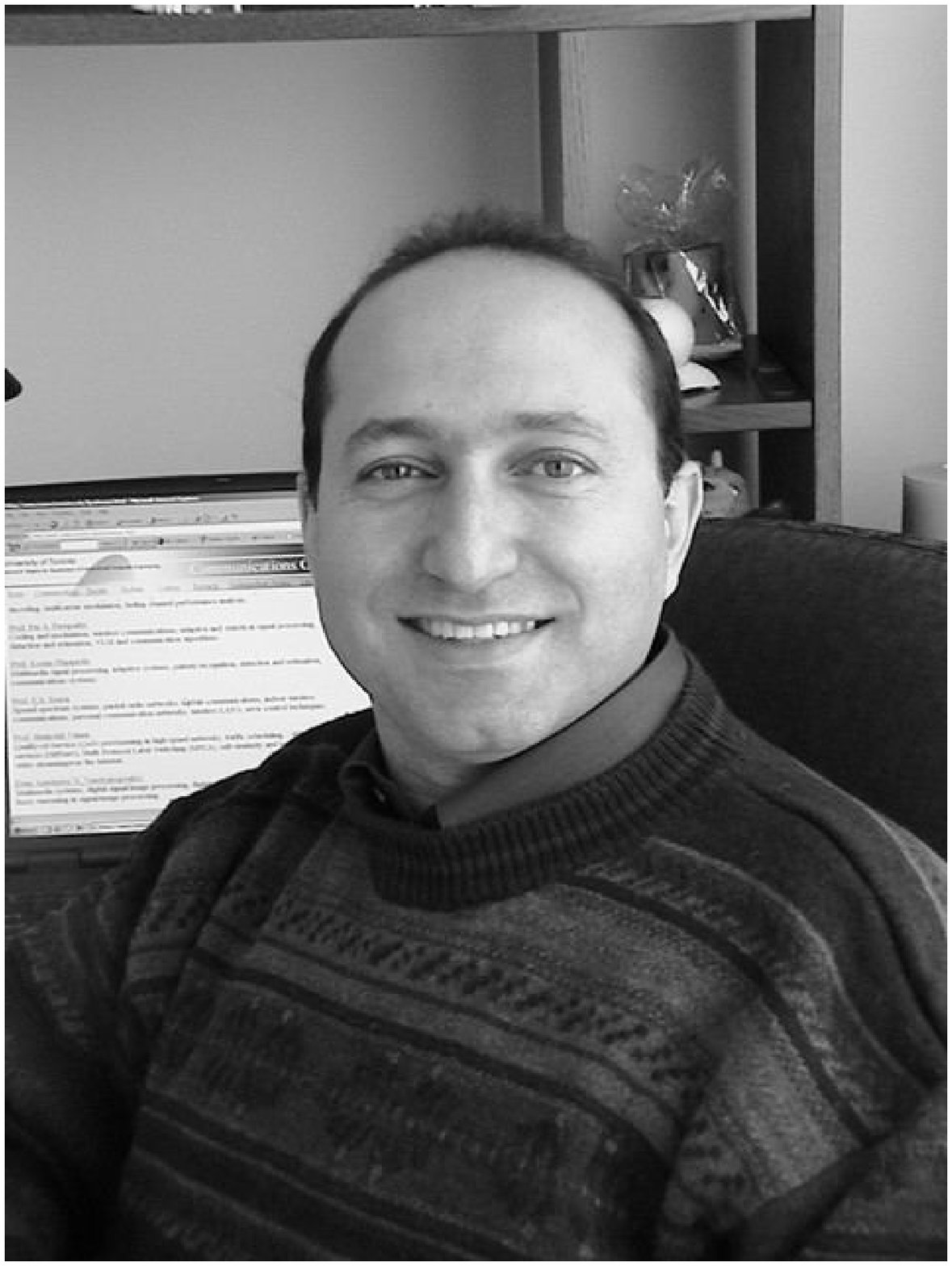}}]{Shahrokh Valaee} (S '88, M '00, SM '02) is the Associate Chair for Undergraduate Studies and the Director of the Wireless and Internet Research Laboratory (WIRLab) in the Edward S. Rogers Sr. Department of Electrical and Computer Engineering, University of Toronto, Canada. Prof. Valaee was the Co-Chair for the Wireless Communications Symposium of IEEE GLOBECOM 2006, a Guest Editor for IEEE Wireless Communications Magazine, a Guest Editor for Wiley Journal on Wireless Communications and Mobile Computing, and a Guest Editor of EURASIP Journal on Advances in Signal Processing. He is currently an Editor of IEEE Transactions on Wireless Communications, an Associate Editor of IEEE Signal Processing Letters, and the TPC-Chair of IEEE PIMRC 2011. His current research interests are in wireless, vehicular and sensor networks, location estimation and cellular networks.
\end{IEEEbiography}

\ignore{

\begin{IEEEbiography}[{\includegraphics[width=1in,height=1.25in,clip,keepaspectratio]{valaee}}]{Shahrokh Valaee}
Shahrokh Valaee (S '88, M '00, SM '02) was born in Tabriz, Iran. He received the B.Sc. and M.Sc.
degrees from Tehran University, Tehran, Iran, and the Ph.D. degree from McGill University,
Montreal, Canada, all in Electrical Engineering.

From 1994 to 1995, he was a Research Associate at INRS Telecom, University of Quebec, Montreal,
Canada. From 1996 to 2001, he was an Assistant Professor in the Department of Electrical
Engineering, Tarbiat Modares University, Tehran, Iran, and in the Department of Electrical
Engineering, Sharif University of Technology, Tehran, Iran. During this period, he was also a
consultant to Iran Telecommunications Research Center. Since September 2001, he has been an
Associate Professor in the Edward S. Rogers Sr. Department of Electrical and Computer Engineering,
University of Toronto, Toronto, Ontario, Canada and holds the Nortel Institute Junior Chair of
Communication Networks. He is the founder and the Director of the Wireless and Internet Research
Laboratory (WIRLab) at the University of Toronto.

Dr. Valaee was the Co-Chair for Wireless Communications Symposium of IEEE GLOBECOM 2006, the Editor
for IEEE Wireless Communications Magazine Special Issue on Toward Seamless Internetworking of
Wireless LAN and Cellular Networks, and the Editor of a Special Issue of the Wiley Journal on
Wireless Communications and Mobile Computing on Radio Link and Transport Protocol Engineering for
Future Generation Wireless Mobile Data Networks. He is the Co-Chair of IEEE PIMRC 2011. His current
research interests are in wireless vehicular and sensor networks, location estimation and cellular
networks.
\end{IEEEbiography}
}

\end{document}